\newif\iffull 
  \fulltrue

\documentclass[conference]{IEEEtran}

\usepackage{graphicx}
\usepackage{enumitem}
\usepackage[colorinlistoftodos]{todonotes}
\usepackage{ltl}
\usepackage{amsmath}
\usepackage{amsthm}
\usepackage{amsfonts}
\usepackage{stmaryrd}
\usepackage{dsfont}
\usepackage{xspace}
\usepackage{zlmtt}
\usepackage[svgnames,table]{xcolor}
\usepackage{todonotes}
\usepackage{colortbl}
\usepackage{amssymb}
\usepackage{cite}
\usepackage{float}
\usepackage{multirow}
\usepackage{booktabs}
\usepackage{xspace}
\usepackage{footmisc}
\usepackage{hhline}
\usepackage[colorlinks=true,urlcolor=black]{hyperref}
\usepackage{cleveref}
\usepackage{thmtools}
\usepackage{thm-restate}

\theoremstyle{plain}

\theoremstyle{definition}

\newtheoremstyle{boldnote}
{\topsep}
{\topsep}
{}
{}
{\bfseries}
{.}
{.5em}
{\thmname{#1}\thmnumber{ #2}\thmnote{ \textbf{(#3)}}}

\theoremstyle{boldnote}
\newtheorem{example}{Example}

\newcommand{\ie}[0]{i.e.,~}
\newcommand{\eg}[0]{e.g.,~}

\newcommand{\new}[1]{#1}

 \newcommand{\tightpar}[1]{{\smallskip\noindent\bf #1. }}

\newcommand{\hyperltl}{\textsc{HyperLTL}\xspace}
\newcommand{\hyperpctlStar}{\textsc{HyperPCTL*}\xspace}

\DeclareMathOperator{\Min}{\mathcal{M}\text{\textsc{in}}}
\DeclareMathOperator{\Max}{\mathcal{M}\text{\textsc{ax}}}
\DeclareMathOperator{\Ex}{\mathbb{E}}

\newcommand{\supp}{\ensuremath{\text{\textsl{supp}}}}

\newcommand{\valuation}[1]{\ensuremath{\llbracket #1 \rrbracket}}

\newcommand{\toolname}{\textsc{SVHyper}}

\begin{document}

\title{Statistical Verification of Quantitative Hyperproperties: Beyond Boolean Quantification}

 \author{
 	\IEEEauthorblockN{Amir M. Ahmadian and Hazem Torfah}
 	\IEEEauthorblockA{\textit{Chalmers University of Technology and University of Gothenburg}}
 }

\maketitle

\begin{abstract}
Formalisms for hyperproperties provide a solid foundation for studying the verification problem across classes of relational properties, such as those in information flow control (IFC). However, existing formalisms remain limited in expressiveness when it comes to capturing practical aspects of real-world systems. In particular, they do not adequately account for the quantitative nature of such systems.
In this paper, we address this gap by revisiting the specification and verification of hyperproperties from a quantitative, \emph{measure-based}, perspective. 

We introduce \emph{Quantitative Hyper-Logic} (QHL), which replaces qualitative trace quantifiers with measure-based ones and extends temporal predicates with richer quantitative expressions. 
We further study the verification problem from a statistical verification point of view, and develop  algorithms for the statistical verification of QHL specifications. For the introduced measure-based quantifiers, we particularly provide an analysis in terms of sample complexity and achievable statistical guarantees.
In particular, we show how statistical methods such as Hoeffding’s inequality and extreme value theory can be combined to develop statistical verification algorithms for nested measure-based quantifiers.
Our approach provides quantitative alternatives for where traditional verification methods become infeasible. We demonstrate both expressiveness and efficacy on benchmarks from quantitative IFC, comparing against qualitative methods.
\end{abstract}

\section{Introduction}\label{sec:intro}

Many system properties, particularly those concerning information flow control policies, define correctness in terms of relations over system executions. For example, prominent properties such as noninterference require that any two execution traces with identical low-security inputs produce identical low-security outputs~\cite{DBLP:conf/sp/GoguenM82a}. Such properties cannot be captured by standard trace properties, which reason about individual execution traces in isolation, and instead, are formalized as hyperproperties \cite{hyperprops}. Hyperproperties generalize trace properties from sets of traces to sets of sets of traces. Noninterference, for instance, is a hyperproperty that admits only those sets of traces in which every pair of traces satisfies the relation above.
Hyperproperties provide a unifying framework for studying relational properties in information flow control and beyond \cite{hyperprops,DBLP:conf/cav/HenzingerKKM23,hyperstl,DBLP:journals/tac/AnandMTZ24}. After their introduction, several formalisms have been proposed for specifying hyperproperties, along with a range of formal verification techniques \cite{hyperprops,hyperltl,hyperstl,DBLP:conf/concur/GutsfeldMO20,hyperpctl,secltl}. However, despite offering a solid theoretical foundation, existing approaches still struggle to capture the complexity of real-world systems. In particular, they often lack the expressive power needed to model quantitative aspects that are inherent in practical settings.

In this paper, we initiate the study of quantitative hyperproperties from a measure-theoretic point of view. 
We introduce \emph{Quantitative Hyper-Logic} (QHL), a logic for expressing quantitative hyperproperties through measure-based quantification and  expressions. 
QHL moves beyond the  standard approach to hyperproperty specification, which extends trace logics with explicit trace quantifiers that allow for explicit referencing of traces\footnote{The focus of this paper is on linear-time logics for hyperproperties, in contrast to branching-time. We leave a full study of the latter for future work.}. 
Instead, QHL replaces such quantification with measure-based quantifiers, which define measures over sets of traces. 
For example, in the case of noninterference, considered in its qualitative form as described above, we can express the property in \textsc{HyperLTL} \cite{hyperltl} as follows: 
$$\forall \pi \forall \pi'.~ \LTLsquare (\bigwedge_{a\in LI} a_\pi = a_{\pi'}) \rightarrow \LTLsquare(\bigwedge_{a\in LO} a_\pi = a_{\pi'}),$$
where $\mathit{LI}$ and $\mathit{LO}$ define the sets of (binary) low inputs and outputs, respectively.  
Such a property is, however, too strict in practice. Programs as simple as a password checker already violate this property by providing information whether an input phrase matches a password. 
A more adequate formulation would thus be a quantitative one that bounds the amount of information leaked \cite{qif}. This, however, cannot be accurately described in qualitative logics like \textsc{HyperLTL}. 
Examples of such formulations include quantitative generalizations of noninterference such as those
based on notions of (Bayes) vulnerability \cite{qif,smith2015recent}, given by $V(X) = \max_{x} \mathbb{P}[X=x]$, which defines the worst-case probability that an adversary can guess the value of secret~$X$. 
\new{In QHL, assuming that the adversary's candidate guesses range over the space of secrets, we can formulate this as the measure-based specification:}
$$\Max \pi \ \Ex \pi'.~ \mathds{1}(\pi =_{H} \pi') \leq \lambda.$$
\new{
The formula states that for any trace $\pi$, which contributes a candidate guess by the adversary, we want the expected value of encountering a trace $\pi'$ whose secret matches that guess (expressed as $\pi =_{H} \pi'$) to be below a threshold $\lambda$, and we want this to hold in the worst case ($\Max$). 
The expected value corresponds to the probability that the guess is correct.
}
As we show in the paper, QHL is a very expressive logic and through the nesting of different measure-based quantifiers will allow us to express a rich variety of quantitative hyperproperties.

Previous work on quantitative logics for hyperproperties has  focused on extending probabilistic temporal logics with suitable forms of quantification. These approaches enable reasoning about distributions over system behaviors in purely probabilistic settings \cite{hyperpctl}, as well as quantification over strategies in models that additionally incorporate nondeterminism~\cite{phl,DBLP:conf/atva/AbrahamBBD20}. In such frameworks, however, the quantitative aspect stems from the probabilistic semantics of the underlying logic, while the quantification itself follows the same qualitative principles as in standard hyper-logics. 
Some approaches go a step further,  introducing probability measures over sets of traces \cite{wang2021hyperpctl_star}. These remain, however, limited to probabilistic measures applied to temporal predicates. In contrast, we move beyond purely probabilistic notions by considering more general classes of measures over sets of traces. We will especially study quantification in terms of expectations, minima, and maxima over quantitative expressions, and show that such measures naturally capture a richer class of quantitative hyperproperties.

The study of measure-based quantification also raises fundamental questions regarding verifiability. We investigate the verifiability in a black-box setting and present a set of approaches for the statistical   verification of QHL specifications. 
Different classes of measures require the use of distinct theories and tools from statistical theory. 
For instance, measures defined in terms of expectations can be addressed using sampling techniques, together with results such as Hoeffding's inequality \cite{hoeffding1963probability}. Measures such as maxima and minima, however, require the use of results from extreme value theory \cite{coles2001EVT}.
We develop statistical verification methods tailored to the different classes of measures and study them in terms of statistical guarantees and sample complexity. In particular, we investigate the impact of \emph{nesting and alternations} between different measure-based quantifiers, and how to combine the corresponding statistical techniques while preserving formal guarantees. Using a prototypical implementation, we further demonstrate the applicability of the proposed methods on use cases from the domain of quantitative information flow control, and compare them with existing qualitative and probabilistic approaches.
 
We summarize our contributions as follows: 
\begin{itemize}
    \item We initiate the study of measure-based hyper-logics as a richer formalism for defining quantitative hyperproperties.
    \item We introduce QHL, a logic that enables reasoning about quantitative hyperproperties in terms of measures over relations on traces. 
    \item We develop statistical verification methods based on adaptations of core statistical theories,  particularly studying the adaptation in the presence of  nested alternating quantifiers. 
    \item Lastly, we present experimental results demonstrating the applicability of the proposed methods. 
\end{itemize}

\tightpar{Outline} 
\new{In the next section we survey related work on hyperproperties, quantitative information flow, and probabilistic approaches, and position our work within this landscape.}
In \Cref{sec:qhl}, we present the new logic QHL, and compare the expressivity of the logic to previous qualitative and probabilistic hyper-logics in \Cref{sec:relation-to-logics}. In \Cref{sec:statistical_verification}, we present a set of  statistical verification methods for verifying QHL. 
Our experimental evaluation is given in \Cref{sec:evaluation}. In \Cref{sec:conclusion}, we briefly conclude with pointers to future directions.

\section{Related Work}\label{sec:related_works}
We position our contribution relative to the state of the art on hyperproperties and quantitative information flow (QIF).

 \tightpar{Hyperproperty Formalisms}
 Since the introduction of hyperproperties \cite{hyperprops}, several formalisms have been proposed for their specification.
 In their seminal work, Clarkson and Schneider already provide a logical characterization of hyperproperties using second-order logic. However, this characterization was generally not verifiable. 
 For some time, logical approaches to hyperproperties were restricted to reductions into standard temporal logics via self-composition techniques \cite{DBLP:conf/csfw/HuismanWS06, DBLP:journals/mscs/BartheDR11}.
 The landscape shifted with the introduction of logics like \textsc{HyperLTL}~\cite{hyperltl}. \textsc{HyperLTL} extends LTL (linear-time temporal logic \cite{ltl}) with universal and existential quantifiers over traces, enabling reasoning over tuples of executions. 
 Similar extensions have been developed for logics over continuous signals, leading to logics like \textsc{HyperSTL} \cite{hyperstl}. Hyperproperties have also been studied in branching-time settings, most notably with logics like \textsc{HyperCTL*}. 
Later work introduced more expressive logics such as \textsc{HyperPDL} \cite{DBLP:conf/concur/GutsfeldMO20}, enabling reasoning about $\omega$-regular properties beyond those in \textsc{HyperLTL} and \textsc{HyperCTL*}. 
All these formalisms allow us to reason about qualitative hyperproperties, 
inheriting the qualitative semantics of their underlying logics. 
Quantitative reasoning in these logics is limited to defining cardinality constraints by explicitly quantifying over a fixed number of traces \cite{DBLP:conf/cav/FinkbeinerHT18,sahai2020verification}.

In recent years, quantitative formalisms for specifying hyperproperties have largely focused on probabilistic settings. 
For example, \textsc{HyperPCTL} \cite{hyperpctl} was introduced to define properties of purely probabilistic systems by extending the probabilistic temporal logic PCTL with quantification over states. 
Logics that further allow reasoning about systems that combine both probabilistic and nondeterministic behavior were proposed by extending \textsc{HyperPCTL} with quantifiers over strategies \cite{DBLP:conf/atva/AbrahamBBD20}, and with logics like \textsc{PHL} \cite{phl}, which extends \textsc{HyperCTL*} with  a probabilistic operator and quantifiers over strategies. 
Further variants that allow for the nesting of temporal and probabilistic operators have also been proposed, like \textsc{HyperPCTL*} \cite{wang2021hyperpctl_star}. 
With the exception of \textsc{HyperPCTL*}, existing probabilistic hyper-logics do not support reasoning in terms of measures over relations between traces. 
Instead, their quantitative nature derives from the probabilistic semantics of the underlying logic, while the quantification itself follows the same qualitative principles as in standard hyper-logics. 
\textsc{HyperPCTL*} goes beyond this, to allow probability measures over sets of traces, but remains restricted to probabilistic measures applied
to temporal predicates. 
In contrast, we expand the study to a more general perspective by introducing measure-based quantifiers and quantitative expressions that capture a richer class of quantitative hyperproperties.

The development of these formalisms has been accompanied by efforts to study their verification problems \cite{hyperalg,DBLP:conf/cav/FinkbeinerHT18,phl,DBLP:conf/csfw/BonakdarpourF18}. 
Statistical verification methods have especially been proposed for probabilistic hyper-logics, with existing work primarily focusing on direct adaptations of statistical model checking (SMC) techniques, such as sequential probability ratio tests \cite{10.1145/3358232,wang2021hyperpctl_star}.
Research on statistical methods for quantitative hyperproperties remains in its infancy. 
In this work, we take a step in this direction by initiating a systematic study of statistical verification for richer, measure-based formalisms of quantitative hyperproperties, and exploring how classical statistical theories can be adapted to this setting.

\tightpar{Quantitative Information Flow}%
\new{
In a largely independent line of work in QIF, a rich body of leakage measures and analyses has been developed. 
Beyond the notion of vulnerability \cite{qif}, the g-leakage framework \cite{alvim2012measuring,alvim2020science} parameterizes leakage by a gain function that rewards an adversary's guess according to the actual secret. This unifies Bayes vulnerability, multi-try attacks, and guessing entropy as instances of a single quantity computed over the hyper-distribution induced by a program's input–output channel.
As we show in Section \ref{sec:examples}, this family of measures is directly expressible in QHL. %
}

\new{
Hyper-logics and QIF have been related once before, by classifying QIF verification problems as a safety and liveness hyperproperties \cite{yasuoka2014quantitative}. 
Our measure-based perspective is complementary as it provides a general-purpose  logic: rather than defining each leakage measure separately, QIF-style measures are written as formulas in a common syntax, and are therefore verified by a single algorithm.
QHL thus fills a gap between the discrete, exactly verified quantitative-hyperproperty line \cite{DBLP:conf/cav/FinkbeinerHT18,sahai2020verification} and the measure-specific QIF tradition \cite{alvim2020science}. 
We do not aim to replace specialized leakage measures or their accompanying theories, but to provide the specification and verification machinery needed once such a measure has been chosen and must be nested, combined, or compared across several sub-populations of system traces.
}

\new{
Statistical estimation of leakage has also been studied, across several methodological families: purely black-box estimators computed from repeated executions with statistical confidence guarantees \cite{chothia2013tool,chothia2014leakwatch,chatzikokolakis2010statistical}, hybrid methods that combine precise and statistical analysis over program components to reduce sample counts \cite{kawamoto2016hybrid}, symbolic approaches that rely on symbolic sampling to yield provably upper and lower bounds \cite{malacaria2018symbolic}, and fuzzing-based tools that estimate leakage dynamically and steer the search toward leaking executions \cite{blackwell2025nifuzz}.
All of these estimate a single, fixed leakage quantity over one secret–observable channel, and none addresses the compositional verification of nested, alternating measure-based quantifiers that is our concern here. 
We further note that our approach is agnostic to the sampling approach used at each quantifier, so the more sample-efficient techniques above could in principle replace the plain Monte Carlo sampling used in our prototypical engine.
}

\tightpar{Beyond QIF}%
Finally, we note that while the study of hyperproperties has predominantly focused on security, its scope extends well beyond this domain to include areas like robustness \cite{DBLP:conf/atva/SeshiaDDFGKSVY18}, fairness \cite{DBLP:conf/cav/HenzingerKKM23}, and more. 
Our measure-based perspective extends to these settings and will provide richer formalisms and methods for specifying and verifying such properties.

\section{Quantitative Hyper-Logic}\label{sec:qhl}

In this section, we introduce the syntax and semantics of QHL. We start with a few examples, from the domain of quantitative information flow, that illustrate the expressivity of our logic in capturing key quantitative hyperproperties. 

\subsection{Example QHL Specifications}
\label{sec:examples}

\begin{example}[Authentication System Vulnerability]
We expand on the notion of \new{Bayes} vulnerability discussed earlier ($V(X) = \max_{x} \mathbb{P}[X=x]$). 
Measures based on vulnerability are crucial for evaluating the security of authentication systems against optimal, one-shot attacks such as targeted password spraying. 
The distribution of passwords selected by human users is in general not uniform. 
Analysis of real-world breach datasets demonstrates that password frequencies follow Zipf's Law \cite{wang2017zipf}. 
Driven by the cognitive bias of simplicity and the principle of least effort, users create predictable, non-random passwords. 
This means that a small number of very common passwords (\eg ``123456'', ``password'') are used more frequently, while the vast majority of passwords are less frequent.

\new{
Let \texttt{authSuccess} be an indicator function over pairs of traces, which evaluates to $1$ if the attacker's guess in $\pi_1$ matches the user password in $\pi_2$, and $0$ otherwise. 
A QHL formula can now be defined to reason about the vulnerability of a password-based authentication system as follows:
$$\Max \pi_1 \ \Ex \pi_2. \ \texttt{authSuccess}(\pi_1, \pi_2).$$
The two traces play different roles as $\pi_1$ contributes only the adversary's candidate guess, and $\pi_2$ only the user's password. 
The inner expectation therefore assumes the guess is fixed and averages the indicator function over the distribution of passwords, yielding the probability that this particular guess is correct. 
The worst-case probability is then captured by maximizing over the adversary's guesses, which yields $V(X)$ provided the guesses occurring in the trace space range over the space of passwords.
}

\new{
The indicator \texttt{authSuccess} acts as the identity gain function~\cite{alvim2012measuring}, rewarding the adversary only for a guess that matches the password exactly, so that the formula above computes  Bayes vulnerability $V(X)$, the worst-case probability of success under a single guess. 
A similar formula can be used to capture the wider family of gain-function-based measures of the g-leakage framework~\cite{alvim2012measuring}. 
Replacing the indicator by a gain function $g(\pi_1, \pi_2)$, which rewards the guess carried by $\pi_1$ according to the password carried by $\pi_2$, we obtain the specification:
$$\Max \pi_1 \ \Ex \pi_2. \ g(\pi_1, \pi_2)$$
computes the g-vulnerability $V_g(X) = \max_{w} \sum_{x} \mathbb{P}[X = x] \cdot g(w, x)$, where $w$ is the attacker guess and $X$ is the password.
The inner $\Ex$ is the expected gain of a fixed guess, and the outer $\Max$ selects the optimal one over the guesses. 
Bayes vulnerability is the instance obtained by taking $g$ to be the identity gain function~\cite{alvim2012measuring}.
}

\new{
The specifications above quantify the vulnerability of the secret before anything about a particular execution has been observed. 
Systems generally also expose an adversary-observable output $Y$, and the corresponding posterior measures ask how vulnerable the secret becomes once an output value has been observed by the attacker. 
The posterior g-vulnerability is defined as $\sum_y \mathbb{P}[Y = y] \max_w \sum_x \mathbb{P}[X = x \mid Y = y]\, g(w,x)$, which differs from the prior measure in two respects: the secret is averaged over the conditional distribution $\mathbb{P}[X = x \mid Y = y]$ rather than over its full distribution, and this evaluation is itself averaged over the observations.
}

\new{
QHL captures the latter with a further nested expectation quantifier, and the former with a \emph{relational trace constraint}. 
A quantifier in QHL may be restricted by a relation over traces, written $\pi_{\sim \psi}$, which limits the measure space of that quantifier to the traces in the relation $\psi$. 
Such a restriction is precisely a conditioning of the system's measure, and therefore supplies the conditional distribution that the posterior measures require. 
Writing $R_{obs}(\pi_0, \pi_2)$ for the relation that holds when two traces agree on the observable output, we obtain the specification:
$$\Ex \pi_0 \ \Max \pi_1 \ \Ex \pi_{2 \sim R_{obs}(\pi_0,\pi_2)}. \ g(\pi_1, \pi_2)$$
computes the posterior g-vulnerability.
The trace $\pi_0$ contributes only an observation, the innermost $\Ex$ averages the gain over the passwords that are consistent with that observation, the $\Max$ selects the adversary's best guess for it, and the outer expectation averages over observations weighted by their probability.
The alternation of extremal and expectation quantifiers in this specification is exactly the structure that the statistical verification algorithms of Section \ref{sec:statistical_verification} are designed to handle.
}
A verification problem can be defined by introducing a bound $\lambda$ and verifying whether evaluating the above QHL specification results in  a value below $\lambda$ \footnote{We note that one could also introduce a syntactic layer to QHL explicitly specifying comparisons between QHL formulas. We leave this out for simplicity and maintain a purely measure-based semantics for QHL. Our verification algorithms are not impacted by this choice.}. 
\end{example}

\begin{example}[Power Side-Channel Attacks]
In hardware security, cryptographic devices (such as smartcards) inadvertently leak information through physical side-channels, most notably power consumption. 
Attackers can exploit this via Differential Power Analysis (DPA) \cite{kocher1999dpa} to recover secret keys. 
We want to measure the absolute maximum expected leakage capacity: given an identical attacker-controlled input (the plaintext), what is the worst-case expected observable distance in continuous power consumption (in millivolts) caused by the variation of the high-security input (the secret key)?

Using a quantitative distance metric \texttt{powerDist}, which calculates the absolute difference in power consumption between the two traces, based on the value of the secret key, we can define the following specification: 
$$\Max \pi_1 \ \Ex \pi_{2 \sim {R_{\texttt{samePT}}(\pi_1, \pi_2)}} . \ \texttt{powerDist}(\pi_1, \pi_2),$$
where $R_{\texttt{samePT}}(\pi_1, \pi_2)$ is a relational trace constraint that restricts the measure space of the inner expectation to traces where the attacker's chosen plaintext in $\pi_2$ matches the candidate plaintext explored in $\pi_1$.  
The secret key remains unconstrained. 
\end{example}

\begin{example}[Probabilistic Noninterference]
Probabilistic noninterference \cite{gray1992prop_noninter,sabelfeld2000probabilistic} is a quantitative variant of noninterference that states that for two traces with identical low inputs we cannot distinguish the probability distributions of low outputs. 

In QHL we can define such a property as follows. 
We define the relational constraint $R_{LI}(\pi_1, \pi_2)$, which constrains the two baseline traces to have identical low initial inputs, meaning input-wise they differ only in their high inputs. 
Probabilistic noninterference can then be given by the specification:
\begin{align*}
    \Max\ \pi_1 \ & \Max \pi_{2 \sim {R_\mathit{LI}(\pi_1, \pi_2)}}.~ \\
    \Big| & \Ex {\pi'_1}_{\sim R_\mathit{LI}(\pi_1, \pi'_1)}. \texttt{termO}(\pi_1,\pi'_1) - \\
    & \Ex {\pi'_2}_{\sim R_\mathit{LI}(\pi_2, \pi'_2)}. \texttt{termO}(\pi_2,\pi'_2)\Big|,
\end{align*}
where \texttt{termO} is an indicator function on output equality. 
The baseline traces with identical low inputs are defined using the variables $\pi_1$ and $\pi_2$. 
These can have different valuations of the high inputs. 
For each trace $\pi_i$, for $i\in \{1,2\}$, we use the inner expressions,  $\Ex {\pi'_i}_{\sim R_{LI}(\pi_i, \pi'_i)} \texttt{termO}(\pi_i,\pi'_i)$, to capture the probability of observing an output for a given low input. 
The difference between the two expected value expressions defines the difference in output probabilities for the observed inputs, and the $\Max$ quantifier returns the worst-case difference.
\end{example}

\subsection{Notation}
In our work, we assume an underlying system, whose observable behavior can be modeled by a Markov chain. 
A Markov chain is modeled as a  sequence of random variables $X_1,X_2, \dots$ satisfying the Markov property, i.e., for any $i\ge 0$, it holds that $Pr(X_{i+1}=v_{i+1}\mid X_i=v_i) = Pr(X_{i+1} = v_{i+1} \mid  X_0=v_0, X_1=v_1, \dots, X_i=v_i)$. 
We define a trace $t = v_1v_2 \dots$ as an infinite sequence of valuations of the random variable, representing a single execution of the system, such that for any $i$ it holds that $Pr(X_{i+1} = v_{i+1} \mid X_i = v_i) >0$. 
We define  $T$ to be the space of all such infinite observation sequences induced by the Markov chain.
We define a finite prefix of a trace $t$ up to step $k$ as $t[..k]$.
To reason about the stochastic behavior of the system, we assume $T$ is a measurable space equipped with a probability measure $\mu$. 
The valid executions of the system are captured by the support of the measure, $\supp(\mu) \subseteq T$, which denotes the set of traces whose finite prefixes have a non-zero probability.

To formally specify properties over these traces, we assume a countable set of trace variables $\textit{Var} = \{\pi_1, \pi_2, \dots\}$, which range over $T$, and a set of quantitative functions $f: T^k \to \mathbb{R}$, which map $k$-tuples of traces to real-valued quantitative metrics (\eg execution time, energy consumed, or distance).

\subsection{Syntax}
The syntax of QHL is defined over three components: 
(1) \emph{relational trace formulas} ($\psi$) which define structural, observational, or temporal constraints between traces, 
(2) \emph{trace measures} ($\varphi$) which evaluate quantitative metrics over a $k$-tuple of traces, and 
(3) \emph{set measures} ($\Phi$) which aggregate those trace measures across the system's probability measure to quantify the system's overall behavior.

The grammar of QHL is defined as follows:
\begin{align*}
	\psi ::= \ &\text{true} \mid R(\pi_1, \dots, \pi_n) \mid \psi \land \psi \mid \neg \psi \\
	\varphi ::= \ &c \mid f(\pi_1, \dots, \pi_k) \\
	\Phi ::= \ &\varphi \mid \Max \pi_{\sim \psi} . \ \Phi \mid \Min \pi_{\sim \psi} . \ \Phi \mid \\ &\Ex \pi_{\sim \psi} . \ \Phi \mid |\Phi| \mid \Phi \oplus \Phi 
\end{align*}
where $R$ is an arbitrary $n$-ary relational predicate over traces, $c \in \mathbb{R}$ is a constant, $f$ is a $k$-ary quantitative function, and $\oplus \in \{+,-\}$.
The notation $\pi_{\sim \psi}$ indicates that valuations of the trace variable $\pi$ belong to the space defined by the relation $\psi$.

\subsection{Semantics}
QHL formulas are interpreted over a probability measure $\mu$ and a trace assignment map $\Pi\colon \mathit{Var} \to T$, which maps each trace variable to a specific execution trace from $T$. 

We first define the satisfaction relation $\Pi \models \psi$ for relational trace formulas:
\begin{align*}
	\Pi &\models \text{true}  &&\iff \text{true} \\
	\Pi &\models R(\pi_1, \dots, \pi_n) &&\iff (\Pi(\pi_1), \dots, \Pi(\pi_n)) \in \mathcal{I}(R) \\
	\Pi &\models \psi_1 \land \psi_2 &&\iff \Pi \models \psi_1 \text{ and } \Pi \models \psi_2 \\
	\Pi &\models \neg \psi &&\iff \Pi \not\models \psi
\end{align*}
where $\mathcal{I}(R)$ is the semantic interpretation of the relation $R$ over $T^n$. 
This interpretation formally maps the relation $R$ to the explicit mathematical set of $n$-tuples of traces that satisfy the condition.
	For example, we can model an input-equivalence restriction up to step $k$ by defining a binary relation symbol $R_{in}^k$. Its formal interpretation is defined as the set of all trace pairs that share identical inputs:
	\begin{align*}
		\mathcal{I}(R_{in}^k) = \{ (t_1, t_2) \in T \times T \mid t_1[..k] =_{in} t_2[..k] \}
	\end{align*}
	Thus, the semantic rule $\Pi \models R_{in}^k(\pi_1, \pi_2)$ dictates that the relational formula evaluates to true if and only if the specific execution traces currently assigned to variables $\pi_1$ and $\pi_2$ by $\Pi$ belong to this equivalence set.

Given an existing trace assignment $\Pi$ and a relational constraint~$\psi$, the set of valid traces that satisfy the constraint for a new trace variable $\pi \in \mathit{Var}$ is defined as the acceptance set:
\begin{align*}
	Sat(\pi, \psi, \Pi) = \{ t \in T \mid \Pi[\pi \mapsto t] \models \psi \}
\end{align*}

The restricted measure $\mu_{\psi, \pi}^\Pi$ is the conditional probability of a new trace $\pi$, given the constraints imposed by previously sampled traces. 
For any measurable subset of traces $A \subseteq T$, it is defined as:
\begin{align*}
	\mu_{\psi, \pi}^\Pi(A) = \frac{\mu(A \cap Sat(\pi, \psi, \Pi))}{\mu(Sat(\pi, \psi, \Pi))}
\end{align*}
where the numerator isolates the probability of observing a trace that both belongs to the target set $A$ and satisfies the relational constraint $\psi$, and the denominator then divides this by the total probability of the constraint being satisfied in the underlying system. 
This operation effectively discards all invalid traces from $T$ and normalizes the probability mass of the valid traces.

We define the quantitative valuation function $\valuation{\cdot}^\mu_\Pi$, which maps a QHL formula to a real number evaluated under the system's probability measure $\mu$ and the assignments in $\Pi$:
\begin{align*}
	\valuation{ c }^\mu_\Pi &= c \\
	\valuation{ f(\pi_1, \dots, \pi_k) }^\mu_\Pi &= f(\Pi(\pi_1), \dots, \Pi(\pi_k)) \\
	\valuation{ \Max \pi_{\sim \psi} . \ \Phi }^\mu_\Pi &= \sup_{t \in \text{supp}(\mu_{\psi, \pi}^\Pi)} \valuation{ \Phi }^\mu_{\Pi[\pi \mapsto t]} \\
	\valuation{ \Min \pi_{\sim \psi} . \ \Phi }^\mu_\Pi &= \inf_{t \in \text{supp}(\mu_{\psi, \pi}^\Pi)} \valuation{ \Phi }^\mu_{\Pi[\pi \mapsto t]} \\
	\valuation{ \Ex \pi_{\sim \psi} . \ \Phi }^\mu_\Pi &= \int_{T} \valuation{ \Phi }^\mu_{\Pi[\pi \mapsto t]} \, d\mu_{\psi, \pi}^\Pi(t)
\end{align*}

The set measure quantifiers $\Max$, $\Min$, and $\Ex$ dictate how the quantitative outcomes of the inner trace measures are aggregated into a system-wide metric. 
By explicitly incorporating the conditional measure $\mu_{\psi, \pi}^\Pi$ into the quantifier semantics, QHL provides a unified mechanism for evaluating hyperproperties under specific observational or structural constraints.

The $\Max$ and $\Min$ operators search for the supremum and infimum outcomes, respectively. 
By strictly bounding this search to the support of the conditional probability measure ($\text{supp}(\mu_{\psi, \pi}^\Pi)$), the logic ensures that it only evaluates executions that satisfy the relational constraint $\psi$ and have a valid, non-zero chance of occurring. 
Similarly, the $\Ex$ operator evaluates the expected, average-case behavior of the system within the restricted domain. 
It achieves this by computing the integral over the trace space $T$ (since $T$ is uncountably infinite), calculating a weighted average by multiplying the quantitative valuation of each individual trace ($\valuation{\Phi}_{\Pi[\pi \mapsto t]}$) by its conditional probability mass ($d\mu_{\psi, \pi}^\Pi(t)$).

\section{Relation to Existing Logics}\label{sec:relation-to-logics}
In this section, we present how existing qualitative and probabilistic hyper-logics can be expressed in QHL. 
These logics typically evaluate properties over a strict Boolean domain, either implying the satisfiability of a formula, or computing the probability of a Boolean event. 
To formally establish a relation between QHL and these logics, we must first bridge the Boolean-to-quantitative gap.

We embed standard Boolean outcomes into QHL's quantitative semantics by defining a quantitative trace measure as an indicator function. 
Let $\varphi_{bool}$ be a standard quantifier-free Boolean trace (or path) formula. 
The indicator function $\mathbb{I}_{\varphi_{bool}} : T^n \to \{0, 1\}$ maps a tuple of traces to $1$ if they satisfy the property, and $0$ otherwise:
\begin{align*}
	\valuation{ \mathbb{I}_{\varphi_{bool}}(\pi_1, \dots, \pi_n) }^\mu_\Pi = 
	\begin{cases} 
		1 & \text{if } \Pi \models \varphi_{bool}(\pi_1, \dots, \pi_n) \\
		0 & \text{if } \Pi \not\models \varphi_{bool}(\pi_1, \dots, \pi_n)
	\end{cases}
\end{align*}

This indicator function is essential in allowing QHL to natively evaluate and aggregate Boolean satisfiability within its real-valued framework.

\subsection{Relation to Qualitative Logics}\label{sec:relation_to_qualitative_logics}

Existing qualitative hyper-logics evaluate properties by asking whether \emph{all} traces satisfy or \emph{at least one} trace satisfies a relation. 
In QHL, we subsume these qualitative questions by applying extreme set measures ($\Min$ and $\Max$) to the indicator function, effectively searching for worst-case and best-case Boolean satisfiability across the system's measure space.

To demonstrate this relation, we use \hyperltl \cite{hyperltl} as a baseline.
We define a structural translation function $\mathcal{T}$ that recursively maps a \hyperltl formula $\psi$ into a QHL formula. 
Let $\varphi$ be a quantifier-free inner LTL formula. 
The translation $\mathcal{T}$ is defined as follows:
\begin{align*}
	\mathcal{T}(\forall \pi. \ \psi) &= \Min \pi. \ \mathcal{T}(\psi) \\
	\mathcal{T}(\exists \pi. \ \psi) &= \Max \pi. \ \mathcal{T}(\psi) \\
	\mathcal{T}(\varphi) &= \mathbb{I}_{\varphi}
\end{align*}
Under this structural translation, a system satisfies a \hyperltl formula $\psi$ if and only if the quantitative valuation of its QHL translation evaluates to $1$:
\begin{align*}
	\Pi \models \psi \iff \valuation{ \mathcal{T}(\psi) }^\mu_\Pi = 1
\end{align*}

In this mapping, universal quantification ($\forall$) maps to finding the minimum ($\Min$), which intuitively means that to prove that all traces satisfy a Boolean property, the absolute worst-case evaluation across the trace space must still evaluate to $1$. 
Conversely, existential quantification ($\exists$) maps to finding the maximum ($\Max$). 
To prove that at least one trace satisfies a property, the best-case evaluation across the trace space must be $1$, effectively indicating that a witness exists.

\begin{example}
	Observational determinism states that given the same public inputs, the public observations must be identical across all execution traces:
	\begin{align*}
		\forall \pi_1 \ \forall \pi_2 . \ (\pi_1 =_{L,in} \pi_2) \rightarrow \Big( obs_L(\pi_1) = obs_L(\pi_2) \Big)
	\end{align*}
	
	By applying our structural translation $\mathcal{T}$, the nested universal quantifiers are translated into nested $\Min$ quantifiers. 
	Let $\mathbb{I}_{obs\_det}$ be the Boolean indicator function that returns $1$ if the two input traces have different public inputs \emph{or} if they have the same public inputs and their public observations match perfectly, and $0$ otherwise. 
	
	In QHL, ensuring that this property holds translates to searching for the worst-case pair of arbitrary traces and ensuring their evaluation is $1$:
	\begin{align*}
		\valuation{ \Min \pi_1 \ \Min \pi_2 . \ \mathbb{I}_{obs\_det}(\pi_1, \pi_2) }^\mu_\Pi = 1
	\end{align*}
\end{example}

\subsection{Relation to Probabilistic Logics}\label{sec:relation_to_probabilistic_logics}

Beyond qualitative extremes, QHL's expectation quantifier ($\Ex$) naturally extends its expressiveness into the domain of probabilistic hyper-logics. 
In probability theory, the expected value of an indicator function over a measurable space is equivalent to the probability measure of the underlying event.
We exploit this fundamental equivalence to embed Boolean probabilistic queries directly into QHL's quantitative semantics.
By evaluating the indicator function $\mathbb{I}_{\varphi_{bool}}$ under QHL's expectation quantifier ($\Ex$), the integral over the probability measure $\mu$ computes the exact probability mass of the satisfying traces.

To demonstrate this relation, we use \hyperpctlStar \cite{wang2021hyperpctl_star} as a baseline.
\hyperpctlStar was introduced to overcome the limitations of early probabilistic hyper-logics by relying on quantification over execution paths ($\pi$) instead of state quantifiers ($\forall s, \exists s$). 
This path-centric semantics closely resembles QHL's linear-time, trace-based semantics, which allows us to draw a direct connection between the \emph{linear-time} portion of \hyperpctlStar and QHL.

Thus, the \hyperpctlStar probability operator over a tuple of paths $\mathbb{P}^{(\pi_1, \dots, \pi_n)}$ translates to a sequence of QHL expectation quantifiers:
\begin{align*}
	\Pi \models \mathbb{P}^{(\pi_1, \dots, \pi_n)} \varphi_{bool} \iff \valuation{ \Ex \pi_1 \dots \Ex \pi_n . \ \mathbb{I}_{\varphi_{bool}}(\pi_1, \dots, \pi_n) }^\mu_\Pi
\end{align*}

One of the primary contributions of \hyperpctlStar is its ability to express the \emph{arithmetic of probabilistic quantifications} by applying arbitrary elementary functions over probabilities. 
While \hyperpctlStar supports a broad range of mathematical operations (such as products or entropy calculations), QHL only captures their \emph{linear} subset (\eg comparing the sum or difference of probabilities across different execution paths). 
Because QHL natively treats expected values as real numbers, this linear probability arithmetic translates cleanly to QHL's standard arithmetic syntax ($\oplus$).

\begin{example}
	Consider a \hyperpctlStar property stating that the satisfaction probability of reaching a property $a$ on an arbitrary path $\pi_1$ is greater, by at least a constant $c$, than the probability of reaching $b$ on a different arbitrary path $\pi_2$:
	\begin{align*}
		\mathbb{P}^{\pi_1}(\LTLdiamond a^{\pi_1}) - \mathbb{P}^{\pi_2}(\LTLdiamond b^{\pi_2}) > c
	\end{align*}
	
	Let $\mathbb{I}_{\LTLdiamond a}$ and $\mathbb{I}_{\LTLdiamond b}$ be indicator functions that return $1$ if the input trace eventually reaches a state with property $a$ and $b$, respectively. 
	By framing the inequality as a quantitative evaluation, this can be expressed in QHL as:
	\begin{align*}
		\Big( \Ex \pi_1 . \ \mathbb{I}_{\LTLdiamond a}(\pi_1) \Big) - \Big( \Ex \pi_2 . \ \mathbb{I}_{\LTLdiamond b}(\pi_2) \Big)
	\end{align*}
	and the result of this formula can then be checked to be $ > c$.
\end{example}

It is worth noting that QHL's relationship to \hyperpctlStar is ultimately one of \emph{partial subsumption}. 
\hyperpctlStar has branching-time semantics, allowing probability operators to be nested \emph{inside} temporal operators. 
For example, a property such as $\mathbb{P}^{\pi_1}(\LTLdiamond (\mathbb{P}^{\pi_2}(a^{\pi_2}) > c_2)) > c_1$ evaluates a path $\pi_1$ until some unknown future step ($\LTLdiamond$), captures that exact mid-execution state, and spawns a new path $\pi_2$ branching out from that moment to evaluate its probability. 
\new{As noted earlier, QHL is strictly linear-time and trace-centric.}
While QHL supports \emph{structural nesting} (\eg $\Ex \pi_1 \oplus \Ex \pi_2$), it cannot express \emph{temporal nesting}.
We intentionally accept this limitation; restricting trace quantifiers to the system's initial states (or fixed prefixes) is a deliberate architectural choice required to preserve QHL's applicability to black-box environments. 
In practice, the vast majority of probabilistic hyperproperties (\eg those used in security verification such as probabilistic noninterference) quantify over the system's initial states or shared origin prefixes and, as we will show in Section~\ref{sec:evaluation}, are expressible in QHL.

\section{Statistical Verification of QHL}\label{sec:statistical_verification}

\subsection{Overview} To verify QHL, we propose a simulation-based statistical verification (SV) approach. 
Rather than computing the exact value of the system's underlying probability measure, we estimate the quantitative valuation $\valuation{ \Phi }^\mu_\Pi$ by independently sampling finite traces. 
Instead of computing the exact integrals, suprema, and infima for the $\Ex$, $\Max$, and $\Min$ quantifiers, we evaluate them via statistical approaches based on Hoeffding's inequality \cite{hoeffding1963probability} and extreme value theory (EVT) \cite{coles2001EVT}. 
Our approach provides a formal $(\varepsilon, \delta)$-guarantee, ensuring that for a given error tolerance $\varepsilon > 0$ and a confidence level $1 - \delta \in [0, 1]$, the algorithm returns an estimate $\hat{v}$ such that:
\begin{align*}
	\Pr(|\hat{v} - \valuation{ \Phi }^\mu_\Pi| \le \varepsilon) \ge 1 - \delta
\end{align*}

Achieving this formal guarantee is not merely a direct adoption of existing work in this area.
The expressiveness of QHL introduces several unique verification challenges:
(1) evaluating QHL's conditional probability measure ($\mu_{\psi, \pi}^\Pi$) requires generating samples from a restricted domain, complicating the sampling process;
(2) evaluating the extreme quantifiers ($\Max$ and $\Min$) requires applying extreme value theory, which is very different from the mean-centric approaches used in the existing works in this area \cite{agha2018survey};
(3) the presence of nested quantifiers creates a recursive evaluation tree where local confidence and error bounds accumulate and propagate upward, and depend on the underlying needed theory; \new{thus, neither statistical method alone can evaluate such a nested formula.}
Therefore, the global error ($\varepsilon$) and confidence ($\delta$) budgets must be systematically partitioned and distributed across all sub-formulas such that we can ensure their composition does not violate the global $(\varepsilon, \delta)$-guarantee.
\new{
This decomposition also suggests a natural way to classify quantitative hyperproperties by their statistical verification cost. 
Formulas whose set measure quantifiers are exclusively expectation-based ($\Ex$) admit a fixed, a priori sample bound obtained from Hoeffding's inequality, whereas any occurrence of $\Max$ or $\Min$ requires the sequential, sample-unbounded approaches, and alternations between the two compound this cost through the recursive budget partitioning developed below. 
In this sense the quantifier structure of a QHL formula plays a role analogous to that of quantifier alternation in {\hyperltl} model checking \cite{hyperltl}: not in determining decision complexity, but in determining the achievable sample complexity of statistical verification.
}
\new{%
We particularly show the impact of quantifier alternation on the sample complexity.
Note that alternation in our setting relates to alternation in the underlying statistical theories associated with a quantifier and not necessarily an alternation between dual quantifiers, as in {\hyperltl}.
}

\tightpar{Sampling}
In this work, we target black-box systems where the verification engine cannot observe or manipulate the internal state of the system, and can only stimulate it to observe the resulting trace. 
Given this black-box nature, we exclusively focus on finite traces bounded to a maximum length $l$, evaluating our quantitative metrics over the probability measure of the cylinder sets defined by these finite prefixes.
As a result, our statistical estimations and their corresponding $(\varepsilon, \delta)$-guarantees only reflect the system's behavior up to a finite execution depth (\ie $l$).
\new{
Similarly, we require relations ($R$) and functions ($f$) appearing in a specification to be $l$-bounded (\ie $R(t_1,...,t_n)$ depends only on $(t_1[..l],...,t_n[..l])$, and likewise for $f$). 
Initial-state equivalence, prefix equivalence, and metrics such as power distance or execution time up to step $l$ are $l$-bounded by construction.
}

Evaluating QHL's conditional probability measure ($\pi_{\sim {\psi}}$) within this environment requires sampling strictly from the restricted measure defined by the relational formula $\psi$. 
Because the system is a black box, we must rely on \emph{rejection sampling} to generate these conditional samples. 
The engine stimulates the system, produces a trace, and evaluates it against the constraint $\psi$. 
If the new trace satisfies $\psi$, it is accepted and passed to the statistical estimator; otherwise, it is discarded, and the engine draws a new sample.

Rejection sampling preserves our black-box assumptions; it can, however, be computationally expensive, especially when the relational constraint $\psi$ is very complex. 
In this setting, the trace constraint relations $R$ that take the form of initial-state equivalence (\eg $\pi_1[0] = \pi_2[0]$) or finite history equivalence (\eg $\pi_1[..k] = \pi_2[..k]$ where $k \ll l$) are the most efficient, as they allow the engine to reject the invalid traces by simply comparing a short history of the traces, reducing the overhead of rejection sampling.
For systems that support state-saving, the engine can extract the required prefix from the trace assignment map $\Pi$ and initialize the simulator directly to that state, bypassing rejection entirely. 
Alternatively, the engine can simulate the candidate trace step-by-step and employ \emph{early termination}: the moment the prefix diverges from the required constraint, the simulation is immediately halted and rejected.
Incorporating ideas such as importance or adaptive sampling could scale up the sampling process, but we leave investigating the applicability of such techniques to future work.

\tightpar{Statistical Valuation}
Once valid conditional samples are generated and their corresponding measures calculated, the statistical approach used to aggregate them depends on the type of the quantifier. 
For $\Max$ and $\Min$, we rely on extreme value theory, and for $\Ex$, we rely on Hoeffding's inequality. 
Hereafter, we explain the approach used for each quantifier separately, and then detail how these methods are combined to evaluate nested quantifiers and entire QHL formulas.

\subsection{Expectation Quantifier ($\Ex$)}
To evaluate $\valuation{ \Ex \pi_{\sim {\psi}} . \Phi }^\mu_\Pi$, we employ Hoeffding's inequality and assume the underlying valuation of $\Phi$ is bounded within a known interval $[a, b]$, where $a , b \in \mathbb{R}$. 
By Hoeffding's inequality, to guarantee error and confidence margins $(\varepsilon, \delta)$, the required number of sampled traces $N$ is determined by:
\begin{align*}
	N \ge \frac{(b-a)^2 \ln(2/\delta)}{2\varepsilon^2}
\end{align*}
The algorithm draws $N$ independent and identically distributed (i.i.d.) traces $t_1, \dots, t_N$ from the probability measure space as defined by~$\psi$ and $\mu$. 
The expected value is then estimated via the sample mean:
\begin{align*}
	\hat{v}_{\Ex} = \frac{1}{N} \sum_{i=1}^{N} \valuation{ \Phi }_{\Pi[\pi \mapsto t_i]}
\end{align*}

\subsection{Maximum Quantifier ($\Max$)}
To evaluate $\valuation{ \Max \pi_{\sim {\psi}} . \Phi }^\mu_\Pi$, we employ EVT, which statistically models the distribution of extreme values.
The statistical verification engine draws an initial set of $M$ samples and uses the Peaks-Over-Threshold (POT) method \cite{coles2001EVT} to extract the local maximums. 
To do so, it picks a high threshold $u$ (\eg the 95th percentile of the initial $M$ samples).
For each sampled trace evaluation $x$, it keeps the values exceeding this threshold ($x > u$) and records their excesses ($y = x - u$). 
Under the assumption that the system's underlying distribution resides within the maximum domain of attraction (MDA) \cite{coles2001EVT}, the Pickands-Balkema-de Haan theorem \cite{pickands1975statistical} shows that as the threshold $u$ approaches the true absolute maximum, the distribution of these excesses \emph{asymptotically} converges to a Generalized Pareto Distribution (GPD).

A GPD over the excesses is characterized by three parameters: 
(1) a location parameter which defines the starting point of the distribution and in our case, since we are using POT, this value is always $0$ (starting point of the excess dataset), (2) a scale parameter $\sigma > 0$, which defines the statistical dispersion of the extreme values, and (3) a shape parameter $\xi \in \mathbb{R}$, which governs the tail behavior of the distribution. 
The value of $\xi$ fundamentally determines whether the maximum of the system is bounded. 
If $\xi \ge 0$, the distribution exhibits an infinite, unbounded upper tail. 
Conversely, if $\xi < 0$, the distribution exhibits a finite, strict upper bound.

Using Maximum Likelihood Estimation (MLE) \cite{casella2024statistical}, the algorithm fits the sampled excesses to a GPD, assigning values to $\xi$ and~$\sigma$. 
The asymptotic worst-case bound is then computed as the right endpoint of the fitted distribution, shifted back by the threshold $u$:
\begin{align*}
	\hat{v}_{\Max} = 
	\begin{cases} 
		u - \frac{\sigma}{\xi} & \text{if } \xi < 0 \qquad \text{(Bounded upper tail)} \\
		\infty & \text{if } \xi \ge 0 \qquad \text{(Unbounded upper tail)}
	\end{cases}
\end{align*}
If the true maximum is a priori known to be bounded within a specific interval $[a, b]$, the MLE search space can be constrained to $\xi < 0$ such that the resulting right endpoint satisfies $\hat{v}_{\Max} \le b$, thereby improving the performance.

Because the MLE parameters are statistical estimates, a sequential sampling approach is used to dynamically draw \emph{new} samples until the confidence interval of the estimated GPD endpoint shrinks to the specified error tolerance $\varepsilon$. 
By the asymptotic normality of maximum likelihood estimators \cite{casella2024statistical}, the parameter estimates for $\sigma$ and $\xi$ converge to a normal distribution around their true values for a sufficiently large sample size $M$. 
The algorithm relies on the Fisher Information matrix and the Delta Method \cite{coles2001EVT} to approximate the variance of the estimated endpoint $\hat{v}_{\Max}$.
The square root of this variance provides the Standard Error (SE) of the predicted bound. 
Given the desired confidence level $1-\delta$, the corresponding standard score ($Z$-score) for the normal distribution can be easily computed.
The error margin for the current number of samples $M$ is then calculated as $\hat{\varepsilon} = Z \times \text{SE}$.

Because SE is inversely proportional to the sample size, $\hat{\varepsilon}$ monotonically shrinks as $M$ grows. 
To provide the required $(\varepsilon, \delta)$ guarantee, the algorithm checks $\hat{\varepsilon} \le \varepsilon$. 
If the condition is not met, the engine draws additional samples, recalculates the excesses, and refits the GPD until the required tolerance is achieved.

\tightpar{Heuristic Prediction of Sample Size}
While the exact number of the next batch of samples required for the EVT estimation to converge cannot be known a priori, we can leverage the statistical properties of the estimation to approximate a sample size and make the process more efficient. 

By the asymptotic normality of maximum likelihood estimators, the variance of the endpoint estimate decreases at a rate proportional to $1/N$ as the sample size $N$ increases. 
Consequently, the standard error, and by extension, the current empirical error bound $\hat{\varepsilon}$ computed via the Z-score scale proportionally with the inverse square root of the sample size:
\begin{align*}
	\hat{\varepsilon} \propto \frac{1}{\sqrt{N}}
\end{align*}

We can exploit this asymptotic relationship to estimate the total target sample size, $N_{\text{target}}$, required to achieve our required error tolerance $\varepsilon_{\text{target}}$. 
Suppose the algorithm has drawn $N_{\text{current}}$ samples and computed the current empirical error $\hat{\varepsilon}_{\text{current}}$. 
By taking the ratio of these proportional errors, the underlying distribution constants (such as the true variance and the Z-score) cancel out entirely:
\begin{align*}
	\frac{\hat{\varepsilon}_{\text{current}}}{\varepsilon_{\text{target}}} \approx \frac{ \frac{1}{\sqrt{N_{\text{current}}}} }{ \frac{1}{\sqrt{N_{\text{target}}}} } = \sqrt{ \frac{N_{\text{target}}}{N_{\text{current}}} }
\end{align*}

Solving this for the target sample size yields a direct heuristic bound:
\begin{equation} \label{eqn:target_heuristic}
	N_{\text{target}} \approx N_{\text{current}} \times \left( \frac{\hat{\varepsilon}_{\text{current}}}{\varepsilon_{\text{target}}} \right)^2 
\end{equation}

From an implementation standpoint, this prediction formula can help optimize the EVT process. 
Rather than blindly drawing fixed-size batches until convergence, the algorithm draws an initial set of $M$ samples to calculate $\varepsilon_{\text{current}}$, plugs that into formula (\ref{eqn:target_heuristic}), and estimates the remaining number of samples $N_{\text{target}}$ required to achieve $\varepsilon_{\text{target}}$.

Note that this number is just an estimate, and we still need the sequential sampling approach. 
This is because formula (\ref{eqn:target_heuristic}) assumes the underlying shape of the tail is perfectly stable; however, since EVT is asymptotic, as we get more samples the MLE algorithm gets more accurate, and the true values of $\sigma$ and $\xi$ might change slightly.

\subsection{Minimum Quantifier ($\Min$)}
To evaluate $\valuation{ \Min \pi_{\sim {\psi}} . \Phi }^\mu_\Pi$, we apply a lower-tail variant of EVT, which mirrors the procedure used for the $\Max$ quantifier. 
Instead of searching for upper extremes, the verification engine selects a low threshold $u$ and collects the sampled evaluations falling strictly below it.
Rather than calculating excesses, the algorithm records \emph{deficits} ($y = u - x$) and fits them to a GPD.
The underlying statistical mechanics remain identical, and the asymptotic minimum bound is computed as:
\begin{align*}
	\hat{v}_{\Min} = 
	\begin{cases} 
		u + \frac{\sigma}{\xi} & \text{if } \xi < 0 \qquad \text{(Bounded lower tail)} \\
		-\infty & \text{if } \xi \ge 0 \qquad \text{(Unbounded lower tail)}
	\end{cases}
\end{align*}

All subsequent steps (sequential sampling, MLE variance approximation, and dynamic error bounding ($\hat{\varepsilon} \le \varepsilon$)) remain exactly as they do for the $\Max$ quantifier, ensuring the statistical $(\varepsilon, \delta)$ guarantee for the lower bound.

While our EVT-based approach effectively captures the semantics of $\Max$ and $\Min$ quantifiers, it fails when the trace measure evaluates to a strict Boolean domain ($\valuation{\phi}^\mu_\Pi \in \{0, 1\}$), as EVT requires the underlying data to exhibit a continuous tail. 
To evaluate such qualitative hyperproperties, we propose an asymmetric Sequential Probability Ratio Test (SPRT) \cite{agha2018survey}, which we detail in Appendix~\ref{sec:appendix:sprt}.

\subsection{Statistical Verification of QHL Formulas}
Statistical verification of QHL formulas poses a challenge due to the presence of nested quantifiers. 
Evaluating a formula $\Phi$ with $k$ nested quantifiers gives rise to a recursive sampling tree of depth~$k$. 
To evaluate the outermost path $\pi_1$, the valuation engine must recursively sample inner paths $\pi_2, \dots, \pi_k$ to fully populate the trace assignment map $\Pi$ and evaluate the inner, quantifier-free formula~$\phi$. 
Therefore, the problem of $(\varepsilon, \delta)$-statistical verification must be decomposed into the problem of recursively finding the answer of each level and then aggregating their results.
Thus, to guarantee that the final estimate $\hat{v}$ satisfies the global $(\varepsilon, \delta)$ bound at the root, we must systematically distribute the error tolerance $\varepsilon$ and the confidence budget $\delta$ across all nodes in the evaluation tree. 
We separate this distribution into two distinct mechanisms: structural error accumulation and statistical confidence partitioning.
\new{
The proofs of the Lemmas and Theorems of this section are presented in Appendix~\ref{sec:appendix:proofs}. 
}

\subsubsection{Error Accumulation and Tolerance Distribution}
To ensure the global error tolerance $\varepsilon$ is not violated, we must account for how local estimation errors propagate upward through the syntax tree. 
Error accumulates in two distinct ways: horizontally across arithmetic operations, and vertically through nested quantifiers.

\begin{restatable}[Arithmetic Error Accumulation]{lemma}{ArithmeticErrorLemma}\label{lemma:arithmetic_error_accumulation}
	Let $\hat{v}_1$ and $\hat{v}_2$ be statistical estimates of sub-formulas $\Phi_1$ and $\Phi_2$ under the assignment $\Pi$, satisfying error bounds $\varepsilon_1$ and $\varepsilon_2$.  
	For any arithmetic operation $\oplus \in \{+, -\}$, the combined estimate $\hat{v} = \hat{v}_1 \oplus \hat{v}_2$ satisfies the guarantee:
	\begin{align*}
		|\hat{v} - \valuation{\Phi_1 \oplus \Phi_2}^\mu_\Pi| \le \varepsilon_1 + \varepsilon_2
	\end{align*}
\end{restatable}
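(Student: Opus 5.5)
The plan is a direct triangle-inequality argument, with no probabilistic content needed at this stage. The hypotheses are the deterministic inequalities $|\hat{v}_1 - \valuation{\Phi_1}^\mu_\Pi| \le \varepsilon_1$ and $|\hat{v}_2 - \valuation{\Phi_2}^\mu_\Pi| \le \varepsilon_2$. These hold on the event where both estimates are accurate; how likely that event is belongs to the separate confidence-partitioning mechanism.

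First, I unfold the semantics of the arithmetic operator. The valuation function is defined only for quantifiers, constants and functions, so I take the natural reading $\valuation{\Phi_1 \oplus \Phi_2}^\mu_\Pi = \valuation{\Phi_1}^\mu_\Pi \oplus \valuation{\Phi_2}^\mu_\Pi$, the same reading the paper uses for $|\Phi|$. Writing $v_i = \valuation{\Phi_i}^\mu_\Pi$ for $i \in \{1,2\}$, the quantity to bound becomes $|(\hat{v}_1 \oplus \hat{v}_2) - (v_1 \oplus v_2)|$.

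Second, I split into the two cases for $\oplus$.
\begin{itemize}
\item For $+$, regroup the difference as $(\hat{v}_1 - v_1) + (\hat{v}_2 - v_2)$.
\item For $-$, regroup it as $(\hat{v}_1 - v_1) - (\hat{v}_2 - v_2)$.
\end{itemize}
In both cases the triangle inequality $|x \pm y| \le |x| + |y|$ bounds the expression by $|\hat{v}_1 - v_1| + |\hat{v}_2 - v_2| \le \varepsilon_1 + \varepsilon_2$.

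The only real obstacle is interpretive rather than technical. One issue is fixing the semantics of $\oplus$, which the paper's valuation definition leaves implicit. The other is making precise that the guarantee is conditional on both component bounds holding simultaneously. I would add a remark that if each bound holds with probability at least $1-\delta_i$, a union bound gives the combined bound with probability at least $1-\delta_1-\delta_2$. That remark anticipates the confidence partitioning developed next, and this lemma itself needs nothing further. The argument also requires both valuations to be finite: if either is $\pm\infty$ (for instance, an unbounded $\Max$ tail), the error bounds themselves would be meaningless.
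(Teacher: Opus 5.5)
Your proposal is correct and is essentially the paper's proof: the paper also reads $\valuation{\Phi_1 \oplus \Phi_2}^\mu_\Pi$ as $\valuation{\Phi_1}^\mu_\Pi \oplus \valuation{\Phi_2}^\mu_\Pi$ and concludes in one line by the triangle inequality, leaving the probabilistic side to the later confidence-partitioning lemmas. Your extra remarks are sound clarifications the paper leaves implicit: the bounds are conditional on both component bounds holding, and the valuations must be finite, which matters because the $\Max$/$\Min$ estimators may return $\pm\infty$. One small correction: the paper gives no explicit semantic clause for $|\Phi|$ either, so it is not really a precedent for your reading of $\oplus$.
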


\begin{restatable}[Nested Error Accumulation]{lemma}{NestedErrorAccumulation}\label{lemma:nested_error_accumulation}
	Let a quantifier node evaluate a trace measure over an inner sub-formula $\Phi$ that carries a maximum estimation error of $\varepsilon_{\text{inner}}$. 
	If the sampling process of the quantifier itself introduces an estimation error bounded by $\varepsilon_{\text{node}}$, the total error of the evaluation is bounded by $\varepsilon_{\text{node}} + \varepsilon_{\text{inner}}$.
\end{restatable}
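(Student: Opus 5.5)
The plan is a triangle-inequality argument. We introduce an intermediate quantity: the value the quantifier node would produce if its inner sub-formula were evaluated \emph{exactly} on every sampled trace. Fix a quantifier node $Q\,\pi_{\sim\psi}.\,\Phi$ with $Q \in \{\Max,\Min,\Ex\}$, under assignment $\Pi$. Let $t_1,\dots,t_N$ be the sampled traces. Let $x_i = \valuation{\Phi}^\mu_{\Pi[\pi\mapsto t_i]}$ be the exact inner values and $\hat{x}_i$ the recursively computed inner estimates, so that $|\hat{x}_i - x_i| \le \varepsilon_{\text{inner}}$ for every $i$. Write $A$ for the node's aggregation procedure: the sample mean for $\Ex$, the POT/GPD endpoint for $\Max$ and $\Min$. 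The returned estimate is $\hat{v} = A(\hat{x}_1,\dots,\hat{x}_N)$, and we set $\tilde{v} = A(x_1,\dots,x_N)$. The hypothesis on the node's own sampling error says exactly that $|\tilde{v} - \valuation{Q\,\pi_{\sim\psi}.\,\Phi}^\mu_\Pi| \le \varepsilon_{\text{node}}$. By the triangle inequality it therefore suffices to show
\begin{align*}
|\hat{v}-\tilde{v}| \le \varepsilon_{\text{inner}}.
\end{align*}

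This amounts to showing that each aggregation operator is $1$-Lipschitz in the sup-norm. For $\Ex$ it is immediate:
\begin{align*}
\Big|\tfrac1N\textstyle\sum_i \hat{x}_i - \tfrac1N\sum_i x_i\Big| \le \tfrac1N\sum_i|\hat{x}_i-x_i| \le \varepsilon_{\text{inner}}.
\end{align*}
For the extremal quantifiers, the idealised operator is the supremum or infimum. For these we use the elementary fact that $|\sup_i \hat{x}_i - \sup_i x_i| \le \sup_i|\hat{x}_i - x_i|$, and likewise for $\inf$. The same fact applies at the population level: if $|\hat{g}(t)-g(t)|\le\varepsilon_{\text{inner}}$ for all $t$ in $\supp(\mu_{\psi,\pi}^\Pi)$, then the suprema of $\hat{g}$ and $g$ over that support differ by at most $\varepsilon_{\text{inner}}$.

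The main obstacle is the $\Max$/$\Min$ case. There the estimator is not the literal empirical maximum but a GPD endpoint fitted by MLE, and such a fit is not obviously $1$-Lipschitz in its data. My first attempt is to phrase the argument at the level of the target quantity rather than the estimator. Perturbing the inner valuation function by at most $\varepsilon_{\text{inner}}$ pointwise moves the true supremum or infimum by at most $\varepsilon_{\text{inner}}$. We then interpret $\varepsilon_{\text{node}}$ as the node's error in estimating the extreme value of whichever inner function it actually observes, namely the perturbed one, whose target lies within $\varepsilon_{\text{inner}}$ of the exact one. Under that reading the bound $\varepsilon_{\text{node}}+\varepsilon_{\text{inner}}$ follows by one more application of the triangle inequality.

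Finally, for $\Ex$ we should state explicitly that the Hoeffding guarantee is applied to the perturbed bounded values. This is legitimate provided the interval $[a,b]$ is enlarged by $\varepsilon_{\text{inner}}$, which affects only the sample count and not the error bound.
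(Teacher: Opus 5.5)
Your proof is correct at the paper's level of rigour, and it follows the same route as the paper except in the $\Max$/$\Min$ case.

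\textbf{What matches.} The paper also introduces the hypothetical value $\valuation{\Phi_{\text{sampled}}}$: the node's aggregate of the \emph{exact} inner values over the same $N$ traces. It charges $\varepsilon_{\text{node}}$ to the distance of this value from the true value. It then bounds $|\hat v - \valuation{\Phi_{\text{sampled}}}|$ by $\varepsilon_{\text{inner}}$ simply by asserting that $\Max$, $\Min$ and $\Ex$ are non-expansive. Your mean and $\sup/\inf$ computations are that assertion made explicit.

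\textbf{Where you differ.} The paper's non-expansiveness claim holds only if the extremal aggregator is read as the literal empirical maximum or minimum. The algorithm actually returns a fitted GPD endpoint $u - \sigma/\xi$, which is not $1$-Lipschitz in the data; it is very sensitive to the fitted $\xi$ near $0^-$. You flag this and reorder the triangle inequality. The node error is measured against the extreme value of the distribution of the \emph{estimated} inner values, and non-expansiveness is used only at the population level.

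\textbf{What your route buys.} It gives a valid argument for the estimator the algorithm really uses. It also matches what the sequential EVT test certifies: its standard error and stopping rule are computed from the observed $\hat x_i$. They are never computed from the exact $x_i$ that the paper's intermediate quantity refers to.

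\textbf{What it costs.} The EVT hypotheses (maximum domain of attraction, asymptotic normality of the MLE) must now hold for the law of the estimated inner values. That law is a mixture over the children's own sampling randomness, not the exact valuation. Your ``perturbed function'' should also be formalised as the essential supremum (infimum) of that law. Under the lemma's deterministic $\varepsilon_{\text{inner}}$ hypothesis and $l$-boundedness, this essential supremum lies within $\varepsilon_{\text{inner}}$ of the supremum over $\supp(\mu_{\psi,\pi}^\Pi)$, because positive-measure length-$l$ cylinders make the essential and pointwise suprema agree.

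\textbf{One inconsistency.} Your closing remark about enlarging $[a,b]$ is not needed for the $\Ex$ argument you actually gave. There Hoeffding is applied to $\tilde v$, that is, to the exact values $x_i \in [a,b]$. The enlargement would only be needed if you routed $\Ex$ through the population mean of the estimated values. In that case the algorithm's $(b-a)^2$ sample bound would itself have to change.
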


\subsubsection{Confidence Partitioning}
While error tolerance ($\varepsilon$) accumulates additively, the probability of statistical failure ($\delta$) accumulates via the union bound. 
At any given node with an allocated confidence budget $\delta$, we must partition this budget to cover all independent risks of failure.

\paragraph{Arithmetic nodes} For arithmetic nodes ($\oplus$), the risk is split uniformly among the two operand branches. 
A node-level failure occurs if either of those branches fails to satisfy its local bound; thus we assign each branch $\delta_{\text{left}} = \delta_{\text{right}} = \delta / 2$.

\begin{restatable}[Arithmetic Confidence Partitioning]{lemma}{ArithmeticConfidencePartitioningLemma}\label{lemma:arithmetic_confidence_partitioning}
	Let an arithmetic node ($\oplus$) be allocated total confidence budget $\delta$. 
	If the probability of statistical failure for each of its two child operands is bounded by $\delta_{\text{left}} = \delta_{\text{right}} = \delta / 2$, then the total failure probability of the arithmetic node is bounded by $\delta$.
\end{restatable}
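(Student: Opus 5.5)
The argument is a direct application of the union bound, with Lemma~\ref{lemma:arithmetic_error_accumulation} supplying the deterministic link between the children's guarantees and the node's guarantee. First we fix what ``failure'' means at each level. For the left child $\Phi_1$, let $F_1$ be the event $|\hat{v}_1 - \valuation{\Phi_1}^\mu_\Pi| > \varepsilon_1$. Define $F_2$ analogously for $\Phi_2$. For the arithmetic node, let $F$ be the event $|\hat{v}_1 \oplus \hat{v}_2 - \valuation{\Phi_1 \oplus \Phi_2}^\mu_\Pi| > \varepsilon_1 + \varepsilon_2$. Here the node's error tolerance is the one dictated by Lemma~\ref{lemma:arithmetic_error_accumulation}. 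The hypothesis gives $\Pr(F_1) \le \delta/2$ and $\Pr(F_2) \le \delta/2$.

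The key step is the event inclusion $F \subseteq F_1 \cup F_2$, which we prove by contraposition. On the complement $\overline{F_1} \cap \overline{F_2}$, both children meet their local bounds. Lemma~\ref{lemma:arithmetic_error_accumulation} then applies pointwise on this event and yields $|\hat{v} - \valuation{\Phi_1 \oplus \Phi_2}^\mu_\Pi| \le \varepsilon_1 + \varepsilon_2$. This covers both $\oplus = +$ and $\oplus = -$, and in each case it is just the triangle inequality. Hence $\overline{F_1} \cap \overline{F_2} \subseteq \overline{F}$, which is the claimed inclusion.

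Boole's inequality then gives
\begin{align*}
	\Pr(F) \le \Pr(F_1 \cup F_2) \le \Pr(F_1) + \Pr(F_2) \le \frac{\delta}{2} + \frac{\delta}{2} = \delta .
\end{align*}
Equivalently, the node satisfies its $(\varepsilon_1+\varepsilon_2,\delta)$-guarantee with probability at least $1-\delta$.

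The only subtle point is a modelling one, and I expect it to be the main obstacle. The two subtrees may share samples, or their sampling may be correlated through the common assignment $\Pi$, so the events $F_1$ and $F_2$ need not be independent. I will stress that the union bound needs no independence assumption, so the lemma holds regardless of how the child estimators are coupled. I will also remark that the $\delta/2$ split is not essential. Any split $\delta_{\text{left}} + \delta_{\text{right}} \le \delta$ works by the same argument, and the uniform choice is simply the one the algorithm adopts.
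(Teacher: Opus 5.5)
Your proposal is correct and matches the paper's proof: both bound the node's failure probability by $\Pr(F_1 \cup F_2) \le \delta/2 + \delta/2 = \delta$ using the union bound. Your derivation of the inclusion $F \subseteq F_1 \cup F_2$ from Lemma~\ref{lemma:arithmetic_error_accumulation}, together with the remark that no independence between the two branches is needed, spells out what the paper states in one line: ``an error bound violation at the arithmetic node occurs if either branch fails''.
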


\paragraph{Quantifier nodes} For quantifier nodes ($\Ex, \Max, \Min$), the algorithm samples $N$ traces and recursively queries the remainder of the formula ($\Phi_{\text{child}}$) for an answer to estimate the node's true measure. 
The estimation at this level suffers from two independent sources of error:
\begin{itemize}
	\item \textit{Concentration Error ($E_{\text{conc}}$)}: 
	The aggregate estimation at the current level is based on finitely many samples ($N$), which may deviate from the true population measure.
	
	\item \textit{Child Error ($E_{\text{child}}$)}:
	The recursive evaluations of the $N$ sampled traces are themselves statistical estimates, containing inherited empirical bias bounded by their own \emph{local} $\varepsilon$ and $\delta$ parameters.
\end{itemize}

To satisfy the node's total confidence budget $\delta$, we split it into $\delta_{\text{conc}}$ to bound the concentration error, and $\delta_{\text{child\_total}}$ to bound the cumulative error of all recursive children, such that $\delta_{\text{conc}} = \delta_{\text{child\_total}} = \delta / 2$.

\begin{restatable}[Quantifier Confidence Decomposition]{lemma}{QuantifierConfidenceDecompositionLemma}\label{lemma:quantifier_confidence_decomposition}
	Let a quantifier node be allocated a total confidence budget of $\delta$. 
	Bounding the probability of concentration failure by $\delta_{\text{conc}}$ and the probability of any recursive child failing by $\delta_{\text{child\_total}}$, and setting $\delta_{\text{conc}} = \delta_{\text{child\_total}} = \delta / 2$ guarantees the total failure probability of the node is strictly bounded by $\delta$.
\end{restatable}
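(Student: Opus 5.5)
The plan is to reduce the statement to a single application of the union bound (Boole's inequality) over the two failure events identified in the decomposition preceding the lemma. Fix a quantifier node $Q\pi_{\sim\psi}.\,\Phi_{\text{child}}$ with $Q\in\{\Ex,\Max,\Min\}$ and allocated budget $\delta$. I will define two events on the underlying probability space generated by all the sampling at and below this node. The first, $F_{\text{conc}}$, is the event that the node-level aggregate (the Hoeffding sample mean for $\Ex$, or the EVT endpoint estimate for $\Max/\Min$), computed on the \emph{true} child valuations $\valuation{\Phi_{\text{child}}}^\mu_{\Pi[\pi\mapsto t_i]}$, deviates from the node's true measure by more than $\varepsilon_{\text{node}}$. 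The second, $F_{\text{child}}$, is the event that at least one of the $N$ recursive child estimates violates its local error bound. A node failure, meaning the final estimate lies outside the tolerance $\varepsilon_{\text{node}}+\varepsilon_{\text{inner}}$ of Lemma~\ref{lemma:nested_error_accumulation}, can only occur on $F_{\text{conc}}\cup F_{\text{child}}$. Indeed, on the complement both the concentration bound and all child bounds hold, and Lemma~\ref{lemma:nested_error_accumulation} then yields the total error bound deterministically.

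The key steps follow in order. First, I establish the inclusion $\{\text{node fails}\}\subseteq F_{\text{conc}}\cup F_{\text{child}}$ by arguing via the contrapositive with Lemma~\ref{lemma:nested_error_accumulation}. Second, I bound $\Pr(F_{\text{conc}})\le\delta_{\text{conc}}$. For $\Ex$ this follows directly from choosing $N$ via Hoeffding's inequality with parameter $\delta_{\text{conc}}$. For $\Max/\Min$ it comes from taking the $Z$-score at level $1-\delta_{\text{conc}}$ in the sequential EVT stopping rule, under the asymptotic assumptions already adopted in the paper. Third, I bound $\Pr(F_{\text{child}})\le\delta_{\text{child\_total}}$. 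I do this by giving each of the $N$ child calls the local budget $\delta_{\text{child\_total}}/N$ and applying the union bound over the $N$ calls, so that $\Pr(F_{\text{child}})\le\sum_{i=1}^N \delta_{\text{child\_total}}/N=\delta_{\text{child\_total}}$. Finally, a second union bound gives $\Pr(F_{\text{conc}}\cup F_{\text{child}})\le \delta_{\text{conc}}+\delta_{\text{child\_total}}=\delta/2+\delta/2=\delta$.

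I expect the main obstacle to be the third step for the $\Max/\Min$ quantifiers. There, $N$ is not fixed a priori but determined sequentially, so the per-child budget $\delta_{\text{child\_total}}/N$ cannot be assigned in advance. My first attempt would be to assign child budgets along a summable schedule, for instance giving the $i$-th child $\delta_{\text{child\_total}}\cdot 2^{-i}$ or $\delta_{\text{child\_total}}\cdot 6/(\pi^2 i^2)$. The union bound over the countably many potential child calls then still sums to at most $\delta_{\text{child\_total}}$, regardless of when the sampling stops. If the paper's algorithm instead fixes a maximal sample cap, I would use that cap in place of $N$. Note that the union bound requires no independence between the events, so the fact that the samples share the assignment $\Pi$ causes no difficulty. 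The word ``strictly'' in the statement I would read as ``at most $\delta$,'' since the union bound gives only a non-strict inequality in general.
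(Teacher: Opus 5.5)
Your proposal is correct and follows the paper's proof: a union bound over the concentration-failure and child-failure events gives $\Pr(\text{node failure}) \le \delta_{\text{conc}} + \delta_{\text{child\_total}} = \delta/2 + \delta/2 = \delta$. Your second and third steps are not needed here, because the lemma takes those two bounds as hypotheses; the paper handles the per-child allocation, including the $6/(\pi^2 j^2)$ schedule you propose, separately in Lemma~\ref{lemma:quantifier_confidence_distribution}. Your reading of ``strictly'' as ``at most'' matches what the paper's argument actually delivers.
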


To successfully bound the child error such that $\Pr(E_{\text{child}}) \le \delta_{\text{child\_total}}$, the budget $\delta_{\text{child\_total}}$ must be distributed among the $N$ sampled traces, assigning each $j$-th child a local confidence budget $\delta_{\text{child\_}j}$. 
The method of distribution depends on whether the sample size $N$ is fixed or unbounded:
\begin{itemize}
	\item \textit{Fixed ($\Ex$):} 
	$N$ is known \emph{a priori} via Hoeffding's inequality. 
	Thus, we apply a standard Bonferroni correction, allocating each child a budget:
	\begin{align*}
		\delta_{\text{child\_}j} = \delta_{\text{child\_total}} / N
	\end{align*}
	
	\item \textit{Unbounded ($\Max, \Min$):} 
	Extreme Value Theory requires dynamic, sequential sampling until the EVT estimation converges within the acceptable error tolerance. 
	Because $N$ is unbounded, a uniform distribution is impossible. 
	To guarantee that the dynamically allocated confidence pieces given to each child sum up to no more than $\delta_{\text{child\_total}}$, we distribute the budget using a convergent polynomial sequence \cite{apostol2013introduction}:
	\begin{align*}
		\delta_{\text{child\_}j} = \delta_{\text{child\_total}} \cdot \left( \frac{6}{\pi^2 \cdot j^2} \right)
	\end{align*}
\end{itemize}

\begin{restatable}[Quantifier Confidence Distribution]{lemma}{QuantifierConfidenceDistributionLemma}\label{lemma:quantifier_confidence_distribution}
	Let a quantifier node be allocated a concentration budget $\delta_{\text{conc}}$ and a total child budget $\delta_{\text{child\_total}}$. 
	If the node draws $N \in \mathbb{N}$ samples, and each $j$-th sample is recursively evaluated with a confidence budget $\delta_{\text{child\_}j}$ derived either uniformly (for fixed $N$) or via the polynomial sequence (for unbounded $N$), the cumulative probability of child failure strictly satisfies $\Pr(E_{\text{child}}) \le \delta_{\text{child\_total}}$.
\end{restatable}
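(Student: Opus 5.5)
The plan is to write the child-failure event as a union of per-sample failure events and bound it with the union bound (Boole's inequality). The sum of the allocated budgets is then handled separately in the two allocation regimes. Let $F_j$ denote the event that the recursive evaluation of the $j$-th sampled trace violates its local $(\varepsilon,\delta_{\text{child\_}j})$-guarantee. By construction of the recursion, $\Pr(F_j)\le \delta_{\text{child\_}j}$, and this holds conditionally on the sampled trace $t_j$ (and hence on $\Pi[\pi\mapsto t_j]$), because the child is run with that budget whatever trace it receives. Averaging over $t_j$ gives the same unconditional bound. The event $E_{\text{child}}$ is exactly $\bigcup_{j=1}^{N} F_j$, so
\begin{align*}
\Pr(E_{\text{child}}) \le \sum_{j=1}^{N}\Pr(F_j) \le \sum_{j=1}^{N}\delta_{\text{child\_}j}.
\end{align*}
No independence between the $F_j$ is needed, which matters because the children of nested quantifiers may share sampling structure.

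\emph{Fixed case ($\Ex$).} Here $N$ is fixed a priori by Hoeffding's inequality and $\delta_{\text{child\_}j}=\delta_{\text{child\_total}}/N$. The sum is therefore exactly $\delta_{\text{child\_total}}$, which gives the claim.

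\emph{Unbounded case ($\Max$, $\Min$).} Here $N$ is a random stopping time determined by the sequential EVT procedure. I would not bound the sum over $j\le N$ directly. Instead, I would define $F_j$ for every $j\in\mathbb{N}$: $F_j$ is the event that the $j$-th child is evaluated and fails, and it is empty if the procedure stops before drawing sample $j$. The budget $\delta_{\text{child\_}j}$ is assigned by index $j$ alone, independently of when the procedure stops, so $\Pr(F_j)\le\delta_{\text{child\_}j}$ still holds. Then
\begin{align*}
\Pr(E_{\text{child}}) \le \Pr\Big(\bigcup_{j\ge 1}F_j\Big) \le \sum_{j=1}^{\infty}\delta_{\text{child\_total}}\frac{6}{\pi^2 j^2} = \delta_{\text{child\_total}}\cdot\frac{6}{\pi^2}\cdot\frac{\pi^2}{6} = \delta_{\text{child\_total}},
\end{align*}
using countable subadditivity together with the Basel identity $\sum_{j\ge1} j^{-2}=\pi^2/6$.

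The main obstacle is making the unbounded case rigorous with a data-dependent $N$. Summing $\Pr(F_j)$ only up to a random $N$ is not directly justified. Extending to all $j$, and checking that each child's budget does not depend on the future stopping decision, removes that difficulty. For ``strictly'' one can note that in the unbounded case $N<\infty$ almost surely, so only the $\Pr(F_j)$ with $j\le N$ can be nonzero. Since every term $\delta_{\text{child\_}j}$ is positive, the tail $\sum_{j>N}\delta_{\text{child\_}j}$ is positive, so the bound for any realized run is strictly below $\delta_{\text{child\_total}}$. In the fixed case the stated inequality $\le$ is what is claimed and suffices.
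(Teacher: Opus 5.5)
Your proposal is correct and follows the same route as the paper. The paper also takes a union bound over the per-child failure events, sums the uniform allocation to exactly $\delta_{\text{child\_total}}$, and uses the Basel identity $\sum_{j\ge 1} j^{-2} = \pi^2/6$ for the polynomial allocation. Your extension of $F_j$ to all $j\in\mathbb{N}$ (empty when sample $j$ is never drawn, with the budget fixed by the index alone) makes rigorous the data-dependent stopping time that the paper's proof glosses over. Only your closing strictness remark is loose, since it mixes the random $N$ with deterministic probabilities. It is better stated as $\Pr(E_{\text{child}})\le\sum_{j\ge 1}\delta_{\text{child\_}j}\Pr(N\ge j)$, which is strictly below $\delta_{\text{child\_total}}$ when $N<\infty$ almost surely.
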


\subsubsection{The Recursive Verification Algorithm}
Equipped with these structural bounds, the statistical verifier takes as input the global tolerance $\varepsilon$, global confidence $\delta$, and a formula $\Phi$ containing $k_Q$ quantifiers. 
The procedure operates recursively:

\begin{enumerate}
	\item \textit{Tolerance Distribution:} 
	\Cref{lemma:arithmetic_error_accumulation} and \Cref{lemma:nested_error_accumulation} establish that error tolerance $\varepsilon$ accumulates additively. 
	Consequently, the maximum possible error at the root of the evaluation tree is the direct sum of the estimation errors from every underlying quantifier node. 
	To guarantee that this total cumulative error strictly satisfies the global tolerance $\varepsilon$, we distribute the budget uniformly across the $k_Q$ quantifiers in the formula. 
	Each individual quantifier $i$ is therefore assigned a local error tolerance 
		$\varepsilon_i = \frac{\varepsilon}{k_Q}$
	
	\item \textit{Confidence Partitioning:} 
	For quantifier $i$, the incoming local confidence budget $\delta_i$ (where the root receives $\delta_1 = \delta$) is partitioned based on the node type:
	\begin{itemize}
		\item \textit{Arithmetic ($\oplus$):} 
		The incoming budget is divided uniformly among the operands, where each branch receives $\delta_{\text{left}} = \delta_{\text{right}} = \delta_i / 2$.
		\item \textit{Quantifier ($\Max, \Min, \Ex$):} 
		The budget is split equally between the concentration error ($\delta_{\text{conc}} = \delta_i / 2$) and the recursive children error ($\delta_{\text{child\_total}} = \delta_i / 2$).
	\end{itemize}
	
	\item \textit{Sampling \& Recursive Evaluation:} 
	The node draws traces from the system, evaluating each by recursively calling the algorithm on the inner formula $\Phi_{\text{child}}$. 
	The allocated confidence for each call depends on the quantifier:
	\begin{itemize}
		\item \textit{Expectation ($\Ex$):} 
		The node computes a fixed sample size $N$ \emph{a priori} using Hoeffding's inequality (based on $\varepsilon_i$ and $\delta_{\text{conc}}$). 
		Each child is queried with a uniform confidence budget $\delta_{\text{child\_}j} = \delta_{\text{child\_total}} / N$.
		
		\item \textit{Extreme Quantifiers ($\Max$, $\Min$):} 
		The node draws batches of samples dynamically until EVT converges. %
		The $j$-th child is queried with the dynamically decaying confidence budget $\delta_{\text{child\_}j} = \delta_{\text{child\_total}} \cdot \left( \frac{6}{\pi^2 \cdot j^2} \right)$.
	\end{itemize}
	
	\item \textit{Base Case:} 
	At depth $k_Q+1$, no quantifiers remain. 
	The fully populated trace assignment map $\Pi$ is used to evaluate the inner formula $\valuation{ \phi }^\mu_{\Pi}$. 
	The resulting value is returned up the recursive chain.
\end{enumerate}

\begin{restatable}[Statistical Verification Soundness]{theorem}{StatisticalVerificationSoundnessTheorem}\label{thm:sv_soundness}
	Given a QHL formula $\Phi$ containing $k_Q$ quantifiers, a global error tolerance $\varepsilon > 0$, and a global confidence parameter $\delta \in (0, 1)$, the recursive statistical verification algorithm returns an estimate $\hat{v}$ such that
		$\Pr\bigl(|\hat{v} - \valuation{ \Phi }^\mu_\Pi| \le \varepsilon\bigr) \ge 1 - \delta$
\end{restatable}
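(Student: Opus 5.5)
We prove \Cref{thm:sv_soundness} by structural induction on the syntax tree of $\Phi$. The inductive invariant is the following: for every subformula $\Phi'$, every assignment $\Pi$, and every allocated pair $(\varepsilon', \delta')$, the recursive call returns $\hat v'$ with $\Pr(|\hat v' - \valuation{\Phi'}^\mu_\Pi| > \varepsilon') \le \delta'$. Here $\varepsilon'$ is the sum of the local tolerances $\varepsilon_i = \varepsilon/k_Q$ of the quantifier nodes occurring in $\Phi'$. Instantiating the invariant at the root with $\delta_1 = \delta$ and $k_Q$ quantifiers gives total tolerance $k_Q \cdot \varepsilon/k_Q = \varepsilon$, which is exactly the theorem.

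\emph{Base case and arithmetic nodes.} The base case is a quantifier-free $\varphi$ (a constant or $f(\pi_1,\dots,\pi_k)$). It is evaluated exactly on the fully populated $\Pi$, so its error is $0$ and its failure probability is $0$. For $|\Phi'|$ we use the reverse triangle inequality, so neither error nor confidence changes. For $\Phi_1 \oplus \Phi_2$, the induction hypothesis gives each branch tolerance $\varepsilon^{(1)}, \varepsilon^{(2)}$ with failure probability at most $\delta'/2$. \Cref{lemma:arithmetic_error_accumulation} then bounds the combined error by $\varepsilon^{(1)}+\varepsilon^{(2)}$, which is the sum of the quantifier tolerances in $\Phi_1\oplus\Phi_2$. \Cref{lemma:arithmetic_confidence_partitioning}, via a union bound, bounds the failure probability by $\delta'$.

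\emph{Quantifier nodes.} This is the core step. Let $Q\pi_{\sim\psi}.\Phi_c$ have budget $\delta'$ and local tolerance $\varepsilon_i$, and let $\varepsilon_c$ be the tolerance of $\Phi_c$. Define two bad events. $E_{\text{child}}$ is the event that some sampled child estimate $\hat w_j$ deviates from $w_j = \valuation{\Phi_c}^\mu_{\Pi[\pi\mapsto t_j]}$ by more than $\varepsilon_c$. $E_{\text{conc}}$ is the event that the aggregate of the \emph{true} values $w_j$ (the empirical mean for $\Ex$, or the GPD endpoint for $\Max/\Min$) deviates from $\valuation{Q\pi.\Phi_c}^\mu_\Pi$ by more than $\varepsilon_i$. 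Conditional on each $t_j$, the induction hypothesis gives $\Pr(|\hat w_j - w_j|>\varepsilon_c)\le \delta_{\text{child\_}j}$. \Cref{lemma:quantifier_confidence_distribution} therefore yields $\Pr(E_{\text{child}})\le \delta'/2$, and this holds for random $N$ as well, because $\sum_j 6/(\pi^2 j^2) = 1$. Hoeffding's inequality with $N$ chosen from $(\varepsilon_i, \delta'/2)$, or the EVT stopping rule, gives $\Pr(E_{\text{conc}})\le\delta'/2$. \Cref{lemma:quantifier_confidence_decomposition} combines the two bounds to $\delta'$.

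Outside $E_{\text{child}}\cup E_{\text{conc}}$ we argue as follows. The aggregation maps (mean, sup, inf) are $1$-Lipschitz in sup-norm, so replacing each $w_j$ by $\hat w_j$ moves the aggregate by at most $\varepsilon_c$. \Cref{lemma:nested_error_accumulation} then gives total error at most $\varepsilon_i + \varepsilon_c$, which is the tolerance of the node.

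\emph{Main obstacle.} The delicate point is the concentration event, for two reasons.

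First, for $\Ex$, Hoeffding must be applied to the true values $w_j$, which are i.i.d.\ and bounded in $[a,b]$ by assumption, not to the estimates $\hat w_j$. The estimates are not independent of the child randomness in a clean way. The decomposition above handles this by coupling: Hoeffding is applied to the $w_j$, and the child deviations are handled deterministically on the complement of $E_{\text{child}}$.

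Second, for $\Max/\Min$, the EVT bound is only asymptotic. Moreover, the stopping time depends on the data, and the EVT fit is performed on the noisy $\hat w_j$ rather than the true $w_j$. I would handle this by stating explicitly that the EVT guarantee is taken as an assumption: conditional on $\neg E_{\text{child}}$, the fitted endpoint on $\hat w_j$ lies within $\varepsilon_i$ of the supremum of the perturbed values with probability at least $1-\delta'/2$, under the MDA and asymptotic-normality hypotheses stated in the paper. Lipschitzness of $\sup$ then transfers the guarantee back to the $w_j$. In other words, the theorem is sound relative to the asymptotic validity of the EVT confidence interval, and I would make that dependence explicit rather than claim a finite-sample bound there.
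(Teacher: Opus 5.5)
Your proposal is correct and is essentially the paper's proof: structural induction, additive error via Lemmas~\ref{lemma:arithmetic_error_accumulation} and~\ref{lemma:nested_error_accumulation}, and union-bound confidence partitioning via Lemmas~\ref{lemma:arithmetic_confidence_partitioning}--\ref{lemma:quantifier_confidence_distribution}, with your true child values $w_j$ playing the role of the paper's hypothetical exact measure $\valuation{\Phi_{\text{sampled}}}$. Your version is more careful than the paper's in two places: it covers the $|\Phi|$ case, and it states explicitly that the $\Max$/$\Min$ case rests on an assumed asymptotic EVT guarantee for the fit to the noisy $\hat w_j$ --- rightly so, since the GPD-endpoint estimator is not $1$-Lipschitz, so neither the Lipschitz remark in your quantifier paragraph nor the paper's non-expansiveness argument covers that case on its own.
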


\subsection{Discussion on Unbounded Sampling}

\subsubsection{Batching Optimization for Dynamic Sampling}
From a computational perspective, recalculating bounds and dynamically updating the confidence budget for every individual recursive sample $j$ has a significant overhead. 
To optimize our statistical verifier, we introduce batch sampling. 
Instead of drawing samples one by one, the engine draws them in fixed-size blocks of size $B$. 
The polynomial sequence is then applied to batches rather than the individual samples. 
For the $m$-th batch, the aggregate batch budget is defined as:
\begin{align*}
	\delta_{\text{batch\_}m} = \delta_{\text{child\_total}} \cdot \left( \frac{6}{\pi^2 \cdot m^2} \right)
\end{align*}
Within the $m$-th batch, we apply a standard uniform Bonferroni correction, giving each of the $B$ children an identical budget of $\delta_{\text{child}} = \delta_{\text{batch\_}m} / B$. 
This improves efficiency while strictly preserving the infinite union bound established in \Cref{lemma:quantifier_confidence_distribution}.

\subsubsection{Cost of Dynamic Sampling}
While dynamic sampling via the polynomial sequence solves the unbounded sampling problem for $\Max$ and $\Min$, it introduces the cost of \emph{unconsumed} confidence budget.
The budget $\delta_{\text{child\_total}}$ is distributed across an infinite theoretical horizon to satisfy the union bound, while our engine only actually consumes the budget for the samples drawn before the EVT estimator converges. 
If the algorithm successfully reaches the $\varepsilon_i$ tolerance and halts at sample $M$, the remaining budget allocated to the tail end of the series ($\sum_{j=M+1}^{\infty} \delta_{\text{child\_}j}$) is permanently left unused. 
Consequently, the $M$ samples that were actually evaluated received a strictly smaller confidence budget than they would have if $M$ had been known a priori. 
Because smaller $\delta$ values require exponentially more recursive samples to satisfy their local bounds, this unconsumed budget translates directly into increased sample complexity for the children.

\section{Implementation and Evaluation}\label{sec:evaluation}

We have implemented the statistical verification approaches presented in Section~\ref{sec:statistical_verification} in a prototype tool called {\toolname}. 
{\toolname} is developed in Python, and takes as an input a QHL formula, the global error tolerance ($\varepsilon$) and confidence budget ($\delta$), and a \emph{Markov chain} or a \emph{simulator program} as the system model and samples it. %
Depending on the complexity of the system model and the length of the trace, the sampling process can be time-consuming. However, {\toolname} remains highly memory-efficient, as it only needs to store a single trace (or a batch of size $B$) per quantifier in memory.
{\toolname} uses the \texttt{lark} parser library \cite{larkTool} to parse QHL formulas and construct an evaluation Abstract Syntax Tree (AST). 
The verification engine recursively traverses this AST, partitions $\varepsilon$ and $\delta$, and aggregates the recursion results as described in Section~\ref{sec:statistical_verification}.
{\toolname} leverages the \texttt{scipy} library to fit the sampled results into a GPD for $\Max$ and $\Min$ quantifiers. 

In this section, we evaluate {\toolname} on three use cases related to security and privacy. 
In all these examples, we fix $\delta = 0.05$ and evaluate {\toolname} by changing the parameters relevant to the error tolerance and the  complexity of each example.

\subsection{Authentication System Vulnerability}
This use case builds on our \new{Bayes} vulnerability example from \Cref{sec:examples}. 
We consider a password-based authentication system and verify the system against the specification: 
$$\Max \pi_1 \ \Ex \pi_2 . \ \texttt{authSuccess}(\pi_1, \pi_2).$$
We model the authentication system of this use case by assuming that a user's password consists of three components: a word, a digit, and a symbol. 
We use dictionaries of various sizes for each component and assume their underlying distribution follows Zipf's law \cite{wang2017zipf} with the most common elements having the highest probability.
\new{The attacker's candidate guesses are drawn over the same dictionaries, so that they cover the space of passwords.}

To formally evaluate the vulnerability of this system in QHL, we must find the worst-case probability of successful authentication across all possible attacker guesses. 
The results of evaluating the vulnerability on this authentication system using our SV engine are summarized in Table \ref{tab:eval_vulnerability}.
As shown in the last two columns of Table \ref{tab:eval_vulnerability}, the estimated vulnerability of the system is within $\varepsilon = 0.02$ of the true vulnerability. 
This certifies the ability of {\toolname} to accurately estimate the vulnerability of the system.

\begin{table}[h]
	\centering
	\caption{Authentication System Vulnerability Results}
	\label{tab:eval_vulnerability}
	\rowcolors{3}{gray!10}{}
	\resizebox{\columnwidth}{!}{%
		\begin{tabular}{c @{\hspace{1.4em}} c c c @{\hspace{1.4em}} c c}
			\textbf{Dictionary} & \textbf{Number of} & \textbf{Time} & \textbf{Tolerance} & \textbf{Estimated} & \textbf{True} \\
			\textbf{Size} & \textbf{Samples} & (s) & $\varepsilon$  & \textbf{Vulnerability} & \textbf{Vulnerability} \\
			\midrule[1px]
			$10^{4}$ & $17,059,000$ & $142$ & $0.02$ & $0.188$ & $0.185$ \\
			$10^{6}$ & $17,059,000$ & $145$ & $0.02$ & $0.076$ & $0.074$ \\
			$10^{8}$ & $23,534,400$ & $203$ & $0.02$ & $0.069$ & $0.068$ \\
			$10^{10}$ & $23,534,400$ & $206$ & $0.02$ & $0.057$ & $0.055$ \\
			$10^{12}$ & $37,113,700$ & $322$ & $0.02$ & $0.052$ & $0.050$ \\
	\end{tabular}}
\end{table}

An interesting observation from Table \ref{tab:eval_vulnerability} is that the number of samples required remains constant across certain dictionary sizes. 
This is primarily a result of how {\toolname} is designed. 
Because the number of samples required by the Hoeffding inequality depends only on $\delta$ and $\varepsilon$, the inner $\Ex$ quantifier always requires the same number of samples regardless of dictionary size.
While the $\Max$ quantifier may theoretically need more samples, our algorithm uses fixed bucket sizes (\eg $100$), and for this specific dictionary sizes, that bucket size was sufficient for the $\Max$ quantifier to converge.%

\subsection{Power Side-Channel Analysis}
In this use case, we consider the power side-channel attack example from \Cref{sec:examples}. 
We simulate the physical power consumption of a cryptographic microchip. 
In modern hardware, power is consumed primarily when a transistor switches state (from $0$ to $1$, or vice versa). 
Therefore, we employ a standard Hamming Distance power model, where the base power consumed during a cryptographic operation is directly proportional to the number of bits that flip inside the $N$-bit hardware register \cite{li2012hamming}.
To simulate the encryption step, we approximate the S-Box and MixColumns steps of algorithms like AES by a simple multiplication and truncation function rather than actual memory-heavy lookup tables. 
This simulates the avalanche effect of the cryptographic algorithm while allowing our simulation to seamlessly scale to various register widths.
To defend against power analysis, hardware engineers often inject random noise (\eg active shielding) to drown out the cryptographic signal \cite{mangard2007power}. 
Therefore, the final power trace observed by the attacker is the sum of the base power consumption and the masking noise. %

To formally verify the effectiveness of this technique, we must evaluate the system's \emph{Quantitative Noninterference}. 
Specifically, we want to measure the absolute maximum expected leakage capacity: given an identical attacker-controlled input (the plaintext), what is the worst-case expected observable distance in power consumption (in millivolts) caused by the variation of the high-security input (the secret key)?

This brings us back to the quantitative noninterference formalization from \Cref{sec:examples} expressed by the QHL formula:
\begin{align*}
	\Max \pi_1 \ \Ex \pi_{2 \sim R_{\texttt{samePT}}(\pi_1, \pi_2)} . \ \texttt{powerDist}(\pi_1, \pi_2)
\end{align*}
where $R_{\texttt{samePT}}(\pi_1, \pi_2)$ is a constraint restricting the probability measure space of the inner expectation to traces where the attacker's chosen plaintext in $\pi_2$ matches the candidate plaintext explored in $\pi_1$. 
The secret key remains unconstrained. 
We assume access to a quantitative distance function, \texttt{powerDist},  that calculates the absolute difference in power consumption between the two traces as described above. 

Given the quantitative nature of this use case, the error tolerance $\varepsilon$ is a real number (\eg $2$ millivolts).
For a system that uses an $N$-bit register, the state space would be $2^{2N}$.
Therefore, we evaluate the expected worst-case leakage of this system for various register widths $N \in \{4, 8, 16\}$ and error tolerances.
The results are summarized in Table \ref{tab:eval_power_leakage}.

\begin{table}[h]
	\centering
	\caption{Power Side-Channel Analysis Results}
	\label{tab:eval_power_leakage}
	\rowcolors{3}{gray!10}{}
	\resizebox{\columnwidth}{!}{%
		\begin{tabular}{c @{\hspace{1.4em}} c c c @{\hspace{1.4em}} c c}
			\textbf{Register} & \textbf{Number of} & \textbf{Time} & \textbf{Tolerance} & \textbf{Estimated} & \textbf{True} \\
			\textbf{Width ($N$)} & \textbf{Samples} & (s) & $\varepsilon$ (mV) & \textbf{Leakage} & \textbf{Leakage} \\
			\midrule[1px]
			$4$-bit & $640,480$ & $2$ & $4.0$ & $7.22$ & $7.0$ \\
			$8$-bit & $1,549,700$ & $6$ & $4.0$ & $11.35$ & $12.0$ \\
			$16$-bit & $340,993,500$ & $1364$ & $4.0$ & $21.23$ & $22.0$
	\end{tabular}}
\end{table}

This table demonstrates a key advantage of QHL: its ability to provide a concrete, real-valued estimate of system behavior. 
In this use case, both the result and the error tolerance are actual physical measurements (in millivolts) showing the expected power leakage of the system. 
Using this metric, a system designer can easily determine whether the estimated leakage $\pm \varepsilon$ is acceptable or not.
Another notable observation is how the register size affects the complexity of the system. 
As shown in the second column of Table \ref{tab:eval_power_leakage}, the number of samples required to estimate the system's power leakage grows exponentially with the size of the register.

\subsection{Mixnet Anonymity}
In anonymous communication networks (\eg Mixnets), user messages are routed through intermediate nodes that batch input packets in a buffer (of size $N$) and shuffle them on output in order to break any link between sender and receiver.
The anonymity of a user is measured by an \emph{Anonymity Set Size}, which is the number of distinct users in a batch, indicating who could plausibly have been the sender of a specific intercepted packet. 
If this set is too small, the sender can be easily identified by an observer. 
Furthermore, real-world network traffic is not uniform, and a few power users generally generate the majority of traffic \cite{sarrar2012leveraging}. 
To prevent a single user from dominating an entire mix batch and destroying anonymity, mix nodes employ rate-limiting defenses (\eg a maximum of $M$ packets per user per batch). 

In this use case, we simulate a dynamic mixnet node subject to these realistic network assumptions, and evaluate its anonymity set size.
In QHL we can formulate this property as a form of quantitative deniability of the node:
\begin{align*}
	\Min \pi_1 \ \Ex \pi_{2 \sim {R_{\texttt{sameBatch}}(\pi_1, \pi_2)}} . \ \texttt{uniqueSenders}(\pi_2)
\end{align*}
where intuitively, the formula searches for the outgoing batch that offers the smallest anonymity set.
The relational trace constraint $R_{\texttt{sameBatch}}(\pi_1, \pi_2)$ locks the probability measure space to traces that generate the exact same observed outgoing batch (the public output). 
The secret input (the true sender) remains unconstrained and can be any of the users in the buffer. 
We then define a quantitative function, \texttt{uniqueSenders}, which extracts the absolute number of distinct users present in that specific batch.

The results of evaluating the mixnet anonymity set size for varying buffer sizes ($N$) and rate limits ($M$) in a mixnet with $150$ users using {\toolname} are summarized in Table \ref{tab:eval_mixnet}.

\begin{table}[h]
	\centering
	\caption{Mixnet Anonymity Results}
	\label{tab:eval_mixnet}
	\rowcolors{3}{gray!10}{}
	\resizebox{\columnwidth}{!}{%
	\begin{tabular}{c c @{\hspace{1.4em}} c c c @{\hspace{1.4em}} c c}
			\textbf{Buffer} & \textbf{Rate Limit} & \textbf{Number of} & \textbf{Time} & \textbf{Tolerance} & \textbf{Estimated} & \textbf{True} \\
			\textbf{Size ($N$)} & \textbf{($M$)} & \textbf{Samples} & (s) & $\varepsilon$ (users) & \textbf{Set Size} & \textbf{Set Size} \\
			\midrule[1px]
			$6$ & $2$ & $134,000$ & $0.4$ & $2$ &  $3.0$ & $3.0$ \\
			$6$ & $2$ & $230,100$ & $0.6$ & $1.5$ &  $3.0$ & $3.0$ \\
			$10$ & $4$ & $3,583,600$ & $10$ & $2$ & $3.0$ & $3.0$ \\
			$10$ & $4$ & $11,565,900$ & $28$ & $1.5$ & $3.0$ & $3.0$ \\
			$15$ & $4$ & $8,060,400$ & $21$ & $2$ & $3.95$ & $4.0$ \\
			$15$ & $4$ & $20,867,900$ & $55$ & $1.5$ & $3.97$ & $4.0$
	\end{tabular}}
\end{table}

Table~\ref{tab:eval_mixnet} demonstrates that {\toolname} can accurately estimate the anonymity set of this system. 
For nodes with smaller buffer sizes, this estimate is precise. 
However, as the buffer size grows, the estimate becomes less accurate, though it still remains within $\pm \varepsilon$. 
An interesting observation is the effect of $\varepsilon$ on the required number of samples, particularly for large buffers. 
As shown, lowering $\varepsilon$ (to make the estimation more accurate) increases the number of required samples, and this increase is significantly more pronounced for larger buffers.

\subsection{Use Cases from Existing Logics}
As explained in Section \ref{sec:relation-to-logics}, QHL is expressive enough to capture some of the existing qualitative and probabilistic hyper-logics.
We express and evaluate the use cases of \hyperpctlStar \cite{wang2021hyperpctl_star} and \hyperltl \cite{hyperalg} in QHL.
\new{
We illustrate that {\toolname} reaches the same verdict as {\hyperpctlStar}, but generally requires more samples, since the Hoeffding inequality is less sample-efficient than SPRT.
In the case of purely qualitative settings like in \textsc{MCHyper}, we observe that for certain specifications and models, with rare occurrences of violation, \textsc{MCHyper} will naturally report unsatisfied, whereas {\toolname} reports satisfied. This \emph{false positive} stems from {\toolname}'s statistical nature and is in line with its guarantee that a statistical verdict is valid only up to $\varepsilon$, meaning events occurring with probability below $\varepsilon$ may go undetected. 
}
Due to space limitations, these use cases are presented in Appendices \ref{sec:appendix:hyperpctl_star_usecases} and \ref{sec:appendix:hyperpltl_usecases}, respectively.

\section{Conclusion}\label{sec:conclusion}

In this paper, we initiated the study of measure-based hyper-logics, tackling a gap in the expressiveness of traditional hyper-logics in expressing quantitative hyperproperties. 
We introduced Quantitative Hyper-Logic, which replaces qualitative trace quantifiers with measure-based ones and extends temporal predicates with richer quantitative expressions, allowing us to accurately capture the quantitative nature of real-world systems.
To verify these specifications, we developed and studied a set of statistical algorithms that combine Hoeffding's inequality and extreme value theory. 
We evaluated the expressiveness of QHL and the applicability of our algorithms through several use cases focused on quantitative information flow control.

Future work includes exploring new heuristics to improve the efficiency of the verification algorithm, investigating importance and adaptive sampling for rare events, and examining further measure-based quantifiers.

\bibliographystyle{IEEEtran}
\bibliography{ref}

\iffull
\appendices
\section{Proofs}\label{sec:appendix:proofs}

\ArithmeticErrorLemma*
\begin{proof}
	The bound follows directly from the triangle inequality: 
	\begin{align*}
		|(\hat{v}_1 \oplus \hat{v}_2) - (\valuation{\Phi_1}^\mu_\Pi \oplus \valuation{\Phi_2}^\mu_\Pi)| &\le \\
		|\hat{v}_1 - \valuation{\Phi_1}^\mu_\Pi| + |\hat{v}_2 - \valuation{\Phi_2}^\mu_\Pi| &\le \varepsilon_1 + \varepsilon_2
	\end{align*}
\end{proof}

\NestedErrorAccumulation*
\begin{proof}
	Let $\valuation{\Phi_{\text{true}}}$ be the true system measure, $\valuation{\Phi_{\text{sampled}}}$ be the \emph{hypothetical} exact measure over the $N$ finitely drawn traces (\ie assuming the inner sub-formula for each trace is evaluated with zero error), and $\hat{v}$ denote the algorithm's actual computed estimate, which relies on the statistical estimates returned by the child nodes.
	By the triangle inequality, the total estimation error is strictly bounded by:
	\begin{align*}
		|\hat{v} - \valuation{\Phi_{\text{true}}}| \le \underbrace{|\valuation{\Phi_{\text{sampled}}} - \valuation{\Phi_{\text{true}}}|}_{\text{Node Sampling Error}} + \underbrace{|\hat{v} - \valuation{\Phi_{\text{sampled}}}|}_{\text{Inherited Inner Error}}
	\end{align*}
	
	The first term isolates the error introduced purely by finite sampling. By definition of the node's statistical bounds (derived via Hoeffding's inequality or EVT), this error is constrained such that $|\valuation{\Phi_{\text{sampled}}} - \valuation{\Phi_{\text{true}}}| \le \varepsilon_{\text{node}}$. 
	The second term isolates the structural error introduced by the recursive estimations. 
	Because the aggregation operators applied over the samples ($\Max$, $\Min$, and $\Ex$) are non-expansive, the maximum deviation between the aggregate of the true child values ($\valuation{\Phi_{\text{sampled}}}$) and the aggregate of the estimated child values ($\hat{v}$) is bounded by the maximum individual deviation among the children. 
	Given that the inner formula's error is bounded by $\varepsilon_{\text{inner}}$, it follows that $|\hat{v} - \valuation{\Phi_{\text{sampled}}}| \le \varepsilon_{\text{inner}}$. 
	Substituting these bounds yields the total maximum error: 
	\begin{align*}
		|\hat{v} - \valuation{\Phi_{\text{true}}}| \le \varepsilon_{\text{node}} + \varepsilon_{\text{inner}}
	\end{align*}
\end{proof}

\ArithmeticConfidencePartitioningLemma*
\begin{proof}
	Let $E_{\text{left}}$ and $E_{\text{right}}$ denote the failure events of the left and right operand branches, respectively. 
	An error bound violation at the arithmetic node occurs if either branch fails to satisfy its statistical guarantee. 
	By the union bound, the total probability of node failure is $\Pr(\text{node failure}) = \Pr(E_{\text{left}} \cup E_{\text{right}}) \le \Pr(E_{\text{left}}) + \Pr(E_{\text{right}})$. 
	Substituting the allocated branch bounds yields $\Pr(\text{node failure}) \le \delta / 2 + \delta / 2 = \delta$.
\end{proof}

\QuantifierConfidenceDecompositionLemma*
\begin{proof}
	A statistical failure at the node level occurs if either the finite sample aggregation fails ($E_{\text{conc}}$) or any of the recursive child evaluations fails ($E_{\text{child}}$). 
	By the union bound, the total probability of failure is $\Pr(\text{node failure}) = \Pr(E_{\text{conc}} \cup E_{\text{child}}) \le \Pr(E_{\text{conc}}) + \Pr(E_{\text{child}})$. 
	Substituting the bounded risks yields $\Pr(\text{node failure}) \le \delta_{\text{conc}} + \delta_{\text{child\_total}} = \delta / 2 + \delta / 2 = \delta$.
\end{proof}

\QuantifierConfidenceDistributionLemma*
\begin{proof}
	By the union bound, the cumulative risk of child failure is bounded by the sum of their individual risks: $\Pr(E_{\text{child}}) \le \sum_{j=1}^{N} \delta_{\text{child\_}j}$. 
	\begin{itemize}
		\item For fixed $N$, $\sum_{j=1}^{N} (\delta_{\text{child\_total}} / N) = \delta_{\text{child\_total}}$ holds trivially. 
		
		\item For the unbounded case, by the Euler–Riemann zeta function \cite{apostol2013introduction}, the infinite series $\sum_{j=1}^{\infty} 1/j^2$ converges to $\pi^2/6$. 
		Multiplying each term by the inverse ($6/\pi^2$) ensures the infinite sum of fractions exactly equals $1$. 
		Thus, even if $N \to \infty$, the cumulative child risk strictly never exceeds $\delta_{\text{child\_total}}$. 
		Consequently, the union bound strictly holds to infinity:
		\begin{align*}
			\sum_{j=1}^{\infty} \delta_{\text{child\_}j} = \delta_{\text{child\_total}}
		\end{align*}
		regardless of how many samples EVT ultimately requires to converge.
	\end{itemize}
\end{proof}

\StatisticalVerificationSoundnessTheorem*
\begin{proof}
	The proof proceeds by structural induction over the syntax tree of $\Phi$. 
	
	First, we bound the total estimation error. The algorithm distributes the global tolerance uniformly, assigning each quantifier a local tolerance of $\varepsilon / k_Q$. 
	By \Cref{lemma:arithmetic_error_accumulation} and \Cref{lemma:nested_error_accumulation}, statistical error accumulates strictly additively across both arithmetic and nested operations. 
	Therefore, the maximum possible error is bounded by the sum of the local errors of all $k_Q$ quantifiers (\ie $\sum_{i=1}^{k_Q} \frac{\varepsilon}{k_Q} = \varepsilon$).
	
	Second, we bound the total probability of statistical failure. 
	The algorithm partitions the global confidence budget $\delta$ dynamically at each node. 
	\Cref{lemma:arithmetic_confidence_partitioning} guarantees that arithmetic splits preserve the local confidence budget. 
	For quantifier nodes, \Cref{lemma:quantifier_confidence_decomposition} ensures the concentration and child error safely split the budget, while \Cref{lemma:quantifier_confidence_distribution} guarantees that the cumulative child error strictly satisfies its allocated confidence, even under unbounded sequential sampling of EVT. 
	Because every structural node strictly preserves the union bound for its locally partitioned budget, the sum of all failure probabilities across the entire evaluation tree never exceeds the root's initial budget $\delta$.
	
	Since the maximum aggregated error is strictly bounded by $\varepsilon$ and the total probability of any sub-estimation failing is bounded by $\delta$, the global estimate $\hat{v}$ successfully satisfies the $(\varepsilon, \delta)$ guarantee.
\end{proof}

\section{Extreme Quantifiers over Boolean Domains}\label{sec:appendix:sprt}
While the quantitative functions $f$ in QHL allow for continuous metrics (\eg execution time or energy consumption), QHL formulas can also be used to express standard qualitative hyperproperties (\eg {\hyperltl}) where the inner trace measure evaluates to a strict Boolean domain, $\valuation{\phi}^\mu_\Pi \in \{0, 1\}$. 
Our EVT-based approach fails under such circumstances, as the EVT requires the underlying data to exhibit a continuous tail.
The MLE calculates the shape $\xi$ and scale $\sigma$ of the GPD by analyzing the variance and rate of decay among the extreme values. 
Therefore, the algorithm needs to observe how these extreme values decay as they approach the absolute supremum to mathematically predict where the distribution terminates.
For strictly Boolean trace measures, the POT excesses will be discrete. 
Therefore, any threshold $u \in (0, 1)$ will yield excesses of exactly $1 - u$ for every satisfying trace, and no excesses for non-satisfying traces. 
Because every recorded extreme value is identical, the variance of the tail is exactly zero. 
Consequently, MLE estimation for the shape parameter $\xi$ will fail to converge, yielding undefined bounds. 

\subsection{Sequential Probability Ratio Test}
To evaluate probabilistic bounds over Boolean domains, statistical model checkers traditionally employ the Sequential Probability Ratio Test (SPRT) \cite{agha2018survey}. 
SPRT frames the evaluation as a statistical hypothesis testing problem between two competing hypotheses concerning $p$, the true probability of sampling a trace that satisfies the property:
\begin{itemize}
	\item \textit{Null Hypothesis ($H_0 : p \le p_0$):} The hypothesis in which the probability of finding a trace that satisfies the property is practically negligible.
	\item \textit{Alternate Hypothesis ($H_1 : p \ge p_1$):} The hypothesis in which a satisfying trace occurs with a probabilistically significant frequency.
\end{itemize}

SPRT sequentially accumulates statistical evidence by calculating a Log-Likelihood Ratio (LLR) after each sampled trace. 
If the trace evaluates to $1$, the LLR steps up by $\ln(\frac{p_1}{p_0})$ and if it evaluates to $0$, the LLR steps down by $\ln(\frac{1 - p_1}{1 - p_0})$. 
The algorithm continuously draws samples until the LLR crosses one of two decision boundaries:
\begin{align*}
	\text{Upper Boundary } (A) &= \ln \left( \frac{1 - \beta}{\alpha} \right) \quad \implies \text{Accept } H_1 \\
	\text{Lower Boundary } (B) &= \ln \left( \frac{\beta}{1 - \alpha} \right) \quad \implies \text{Accept } H_0
\end{align*}

Crossing the $A$ boundary, the algorithm accepts $H_1$ and crossing the $B$ boundary accepts $H_0$.
Here $\alpha$ defines the acceptable False Positive rate (accepting $H_1$ when $H_0$ is true), and $\beta$ defines the acceptable False Negative rate (accepting $H_0$ when $H_1$ is true).

\subsection{Asymmetric SPRT for $\Max$ Quantifier}
While standard SPRT effectively measures if an average probability mass exceeds a specified threshold, the QHL $\Max$ operator is fundamentally an \emph{existential search}, returning $1$ even if \emph{one} single satisfying witness trace exists in the system's trace space. 
Standard SPRT is ill-suited for this type of strict existential search, because it treats satisfying traces as just another sample.
If the SPRT algorithm sees a sequence of many $0$s followed by a single $1$, it treats the $1$ as a statistical outlier, absorbs it into the average, and potentially still returns $0$.

To strictly enforce the existential semantics of the QHL $\Max$ quantifier, we propose an asymmetric, hybrid verification algorithm.

\tightpar{The supremum override}
When the valuation engine draws a sample, we introduce a conditional check to catch the satisfying (supremum) trace samples. 
If the sampled trace evaluates to $1$, the engine bypasses the statistical test entirely, halts the SPRT, and immediately returns $\hat{v} = 1$. 
This guarantees that a discovered witness is never smoothed over as a statistical anomaly, perfectly preserving the supremum semantics of the $\Max$ quantifier.

\tightpar{The statistical drift}
The SPRT algorithm is going to be used only to process non-satisfying traces ($0$s) in order to provide statistical guarantee for the absence of satisfying traces (and supremum being $0$). 

Since evaluating a pure existential quantifier over an infinite state space is statistically undecidable (as finding a witness where $p \to 0$ would require infinite samples), we relax the absolute supremum to a $(1-\varepsilon)$-quantile supremum. 
By setting the alternate hypothesis exactly to our allocated error tolerance ($p_1 = \varepsilon$), we shift $\varepsilon$ from a \emph{value error} to a \emph{probability mass limit}.
Which means we are effectively looking for the existence of witness traces with a probability mass of \emph{at least} $\varepsilon$

The null hypothesis, on the other hand, needs to capture the complete absence of such witness traces.
If the maximum of a system is truly $0$, there are strictly zero witness traces in the system's entire state space, which dictates setting $p_0 = 0$ in the null hypothesis.
While theoretically setting $p_0 = 0$ would cause a problem for the standard SPRT (as LLR step up would attempt to evaluate $\ln(p_1 / 0)$), it is safe in our setting as the supremum override returns as soon as the algorithm finds a satisfying witness trace.

Assigning $p_1 = \varepsilon$ and $p_0 = 0$ into the SPRT logic, the step size for observing unsatisfying traces (\ie $0$) simplifies to a constant:
\begin{align*}
	\text{LLR Step} = \ln \left( \frac{1 - p_1}{1 - p_0} \right) = \ln \left( \frac{1 - \varepsilon}{1 - 0} \right) = \ln(1 - \varepsilon)
\end{align*}
Since $\varepsilon \in (0, 1)$, the term $\ln(1 - \varepsilon)$ is strictly negative. 
This creates a highly optimized, monotonic \emph{statistical drift}. 
The engine continuously accumulates this negative penalty for every sampled $0$, steadily grinding the LLR downward until it crosses the lower boundary $B$, at which point the SPRT would return $0$ as the supremum.

\tightpar{One-Sided Error Bounding}
In standard SPRT, the decision boundaries are governed by both $\alpha$ (the False Positive rate) and $\beta$ (the False Negative rate). 
However, our asymmetric approach alters this error space, as any satisfying witness ($1$) triggers the supremum override, the system asserts $H_1$ with absolute mathematical certainty (\ie zero statistical error). 
Consequently, our asymmetric SPRT is never used to assert $H_1$, and because the LLR step size is always negative ($\ln(1 - \varepsilon)$), the LLR monotonically decreases. 
The Upper Boundary ($A$) is therefore mathematically unreachable and entirely irrelevant to the verification process.
The asymmetric SPRT's only responsibility is to assert $H_0$ (returning $\hat{v} = 0$) by crossing the Lower Boundary ($B$).
Consequently, there is only one possible statistical failure mode in our verifier: a False Negative, where the engine crosses $B$ and asserts $H_0$ despite a witness actually existing with a probability mass $p \ge \varepsilon$.
To satisfy the union bound and guarantee our global $(\varepsilon, \delta)$ criteria, this specific failure mode must be bounded by the confidence budget. 
Therefore, we allocate the entire statistical risk budget to the False Negative rate, setting $\beta = \delta$. 
Applying the lower boundary approximation $B = \ln\left(\frac{\beta}{1 - \alpha}\right)$ under the condition that statistical False Positives are \emph{impossible} ($\alpha = 0$), the lower boundary $B$ simplifies to:
\begin{align*}
	B = \ln(\delta)
\end{align*}
Thus, the approach will be reduced to a one-sided test, as the SPRT engine accumulates the negative penalty $\ln(1 - \varepsilon)$ for every $0$ sampled until the LLR falls below $\ln(\delta)$. 
At which point, the engine stops with a formal $1 - \delta$ confidence that the probability mass of a witness is less than $\varepsilon$.

\subsection{Error Bounds Preservation}
This asymmetric verification algorithm satisfies the global QHL guarantee $\Pr(|\hat{v} - \valuation{ \Phi }^\mu_\Pi| \le \varepsilon) \ge 1 - \delta$ across both cases:
\begin{itemize}
	\item \textit{Witness Discovered ($\hat{v} = 1$):} 
	Because the engine found a literal witness, the true mathematical supremum is definitively $1$, which means the error is $|1 - 1| = 0 \le \varepsilon$ and the confidence is $100\%$, satisfying the $\ge 1 - \delta$ requirement.
	
	\item \textit{SPRT Halts ($\hat{v} = 0$):} 
	SPRT guarantees that crossing the lower boundary $B$ provides a formal $(1 - \delta)$-confidence guarantee that the true probability $p$ favors $H_0$ over $H_1$. 
	Because we defined $H_1$ based on our error tolerance ($p_1 = \varepsilon$), SPRT guarantees that the probability of finding a witness is $p < \varepsilon$. 
	Under our relaxed \emph{probability mass limit} definition, this means that the true \emph{statistical} valuation would be $0$, yielding an error of $|0 - 0| = 0 \le \varepsilon$ with a $(1 - \delta)$-confidence.
\end{itemize}

\subsection{Asymmetric SPRT for $\Min$ Quantifier}
To enforce the semantics of the $\Min$ quantifier over Boolean domains, we employ an asymmetric SPRT that mirrors the one used for the $\Max$ operator. 

\tightpar{The infimum override}
Similar to how the $\Max$ engine bypasses the statistical test upon discovering a $1$, the $\Min$ engine introduces a conditional check to catch falsifying (infimum) trace samples. 
If the sampled trace evaluates to $0$, the engine immediately halts the SPRT and returns $\hat{v} = 0$. 
This preserves the strict infimum semantics, ensuring that a discovered zero-valuation witness is never smoothed over as a statistical anomaly.

\tightpar{The statistical drift}
The SPRT algorithm is used to process satisfying traces ($1$s) to provide a statistical guarantee for the absence of falsifying traces (and the true infimum being $1$). 
We relax the infimum search to a probability mass limit, testing whether the probability of finding a $0$ is at least $\varepsilon$. 
Thus, the engine accumulates the negative penalty $\ln(1 - \varepsilon)$ exclusively for every sampled $1$. 
It would continuously process these $1$s, driving the LLR downward until it crosses the lower boundary $B = \ln(\delta)$, at which point it would halt to provide $\hat{v} = 1$ with a $(1-\delta)$-confidence guarantee that the probability mass of finding a $0$ is less than $\varepsilon$.

\section{\textsc{MCHyper} Use Cases}\label{sec:appendix:hyperpltl_usecases}
To demonstrate the translation from \hyperltl to QHL as described in Section \ref{sec:relation_to_qualitative_logics}, we selected a few security-related {\hyperltl} specifications from the literature \cite{hyperalg}. 
These specifications have previously been implemented and evaluated using \textsc{MCHyper} \cite{mchyperTool}. 
We translated them into QHL and evaluated them using {\toolname}.

In our evaluation of \hyperltl hyperproperties we focused on the I2C bus master use case of \cite{hyperalg}.
I2C is a widely used bus protocol that connects multiple components in a master-slave topology. 
A typical setup consists of one master, one controller, and several slaves. 
The master communicates to the slaves via two physical wires, the clock line (SCL) and the data line (SDA). 
The interface of the master towards the controller consists of 8 bit wide words for input and output of data, a 3-bit wide address to encode slave numbers, a system clock input, and several reset and control signals.
These use cases were specified in Aiger circuits and {\toolname} was modified to simulate and sample these circuits.

This use case focused on the information flow properties of an I2C bus. 
Even though the I2C bus has no security features, it has been used in security-critical applications, such as smart cards.
We checked this system using {\toolname} against the following information flow properties:

\paragraph{\textbf{Property (NI1)}} states that there is no information flow with respect to the address to which the I2C master intends to send data:

\begin{align*}
	\forall \pi \ \forall \pi' . &\LTLsquare (\overline{\text{ADDR\_I}}_\pi = \overline{\text{ADDR\_I}}_{\pi^{'}}) \\ &\Rightarrow \LTLsquare (\text{SDA\_O}_\pi = \text{SDA\_O}_{\pi^{'}})
\end{align*}%

This information flow is intended, and \textsc{MCHyper} reports the violation.

QHL expresses this property as:
\begin{align*}
	\Min \pi \ \Min \pi' . \ \texttt{NI1}(\pi, \pi')
\end{align*}
where the indicator function \texttt{NI1} simply models the LTL part of the (NI1) formula.

The result of evaluating (NI1) using {\toolname} is reported in Table \ref{tab:mchyper}. 
As we can see, {\toolname} gives a false positive and reports $1$, indicating (NI1) was satisfied.
Given the guarantees of our SV engine, this means that the violating trace was rare, and its probability was less than $\varepsilon$.

\paragraph{\textbf{Property (NI3)}} states that when the write enable bit is not set, no information should flow from the controller inputs to the bus.

\begin{align*}
	\forall \pi \ \forall \pi' . &\LTLsquare (\neg \text{WEn} \wedge \overline{\text{ADDR\_I}}_\pi = \overline{\text{ADDR\_I}}_{\pi^{'}}) \\ &\Rightarrow \LTLsquare (\text{SDA\_O}_\pi = \text{SDA\_O}_{\pi^{'}})
\end{align*}%

Similarly, QHL expresses this property as:
\begin{align*}
	\Min \pi \ \Min \pi' . \ \texttt{NI3}(\pi, \pi')
\end{align*}
where the indicator function \texttt{NI3}, again, simply models the LTL part of the formula.

\textsc{MCHyper} reports that this property is satisfied by the implementation, and {\toolname} verifies this with an error tolerance of $\varepsilon = 0.03$.
The result of evaluating (NI3) is also reported in Table \ref{tab:mchyper}.

\begin{table}[h]
	\centering
	\caption{Evaluation Results for \textsc{MCHyper} Use Cases}
	\label{tab:mchyper}
	\rowcolors{2}{}{gray!10}
	\resizebox{\columnwidth}{!}{%
		\begin{tabular}{c @{\hspace{1.4em}} c c c @{\hspace{1.4em}} c}
			\textbf{Property} & \textbf{Samples} & \textbf{Time (s)} & \textbf{Tolerance} $\varepsilon$ & \textbf{Verdict} \\
			\midrule[1px]
			(NI1) & $90,000$ & $14$ & $0.05$ & $1$ \\
			(NI3) & $230,000$ & $37$ & $0.03$ & $1$ \\
	\end{tabular}}
\end{table}

One notable observation is that for certain specifications marked as \emph{unsatisfied} by \textsc{MCHyper}, {\toolname} returned $1$ (\ie satisfied). 
This \emph{false positive} is rooted primarily in the statistical nature of {\toolname}. 
While a model checker like \textsc{MCHyper} can find rare falsifying traces by exploring the entire state space, a statistical approach may miss them. 
This aligns with our guarantee that {\toolname}'s verdict is valid up to $\varepsilon$, meaning events occurring with a probability lower than $\varepsilon$ may not be detected. 
For future work, we plan to improve {\toolname} using techniques such as importance sampling \cite{agha2018survey} to better identify these rare trace occurrences.

\section{\hyperpctlStar Use Cases}\label{sec:appendix:hyperpctl_star_usecases}
In this section we present the use cases from {\hyperpctlStar} \cite{wang2021hyperpctl_star} and express and evaluate them in {\toolname}.

\paragraph{\textbf{Timing Side-Channel Vulnerability}}
Timing side-channel attacks occur when an attacker can infer secret values by observing the execution time of a program. 
In this case study, we analyze the authentication algorithm of the GabFeed chat server, which is known to have a vulnerability that leaks the number of set bits in the secret key. 
By observing the total execution time across different public keys, an attacker can infer the secret. 
To formally prove the system is secure against this timing side-channel, the probability of termination (reaching termination state $F$) within $k \in \mathbb{N}$ steps should be approximately equal for two arbitrary executions starting with different secret keys $s_1$ and $s_2$. 
This requirement is expressed in \cite{wang2021hyperpctl_star} as:
\begin{align*}
	\mathbb{P}^{\pi_1}((\LTLcircle s_1^{\pi_1}) \Rightarrow (\LTLdiamond^{\le k} F^{\pi_1})) \approx_\varepsilon \mathbb{P}^{\pi_2}((\LTLcircle s_2^{\pi_2}) \Rightarrow (\LTLdiamond^{\le k} F^{\pi_2}))
\end{align*}
which ensures that the probability of termination for the secret keys is approximately equal within some $\varepsilon > 0$ ($\approx_\varepsilon$).

In QHL, we express this property by defining two relational trace constraints, $R_{s_1}$ and $R_{s_2}$, which restrict our sampling space to traces that begin with secret keys $s_1$ and $s_2$, respectively. 
We then define an indicator function, \texttt{evnTerm}, which returns $1$ if the trace terminates and $0$ otherwise. 
QHL can express this by computing the absolute difference ($|\cdot|$) between the two conditional expectations:
\begin{align*}
	\Big| \Ex \pi_{1 \sim {R_{s_1}(\pi_1)}} . \ \texttt{evnTerm}(\pi_1) - \Ex \pi_{2 \sim {R_{s_2}(\pi_1)}} . \ \texttt{evnTerm}(\pi_2) \Big|
\end{align*}

We can ensure approximate equivalence ($\approx_\varepsilon$) by verifying that this absolute difference is bounded by $\varepsilon$ (\eg $\le \varepsilon$).
In this use case, we model $k$ by the length of the trace. 
The results of evaluating the timing side-channel vulnerability using {\toolname} for various trace lengths are summarized in Table \ref{tab:eval_side_channel}.

\begin{table}[h]
	\centering
	\caption{Timing Side-Channel Results}
	\label{tab:eval_side_channel}
	\rowcolors{3}{gray!10}{}
	\resizebox{\columnwidth}{!}{%
		\begin{tabular}{c @{\hspace{1.4em}} c c c @{\hspace{1.4em}} c c}
			\textbf{Trace} & \textbf{Number of} & \textbf{Time} & \textbf{Tolerance} & \textbf{Absolute} &  \\
			\textbf{Length $k$} & \textbf{Samples} & (s) & $\varepsilon$ & \textbf{Difference} & $ \le \varepsilon$ \\
			\midrule[1px]
			$50$ & $203,008$ & $10$ & $0.01$ & $0.173$ & $\times$ \\
			$100$ & $203,008$ & $13$ & $0.01$ & $0.248$ & $\times$ \\
			$150$ & $203,008$ & $17$ & $0.01$ & $0.011$ & $\times$
	\end{tabular}}
\end{table}

{\toolname} reaches the same verdict as \hyperpctlStar, that the GabFeed chat server indeed has a timing side-channel vulnerability.
Comparing the numbers in Table~\ref{tab:eval_side_channel} with what was reported in \cite{wang2021hyperpctl_star}, we can see that {\toolname} takes longer to evaluate the formulas and report the final difference.
This is due to the fact that Hoeffding inequality requires more samples than SPRT \cite{agha2018survey} (used in the \hyperpctlStar's tool) to reach a verdict.
The power of Hoeffding inequality is that it can give us the probabilities and their differences, while hypothesis based approaches such as SPRT only return a yes/no answer \cite{agha2018survey}.
This information is reported in the fifth column of Table \cref{tab:eval_side_channel}.

\paragraph{\textbf{Randomized Cache Replacement Policy}}
Randomized cache replacement policies are countermeasures against cache flush attacks by deciding which cache lines are replaced in case of a cache miss.
The performance requirement dictates that starting from an empty cache, after an initial $N$ steps (when the cache is almost full), the probability of observing exactly zero misses ($T$ consecutive hits) within a time window of $T$ should be strictly greater than the probability of observing exactly one miss in that same window by a margin of $\varepsilon$. 

In \hyperpctlStar this requirement is expressed as:
\begin{align*}
	\mathbb{P}^{\pi_1}(\LTLcircle^{(N)}\Box^{\le T}H^{\pi_1}) > \mathbb{P}^{\pi_2}(\LTLcircle^{(N)}\varphi^{\pi_2}) + \varepsilon
\end{align*}
where $\varphi^{\pi_2}$ means that there is exactly one miss for $N$ consecutive accesses, and $\LTLcircle^{(N)}$ represents the $N$ fold composition of $\LTLcircle$.

In QHL, we abstract $\LTLcircle^{(N)}$ into quantitative indicator functions evaluated over the specified $T$-length window. 
Let \texttt{0miss} evaluate to $1$ if the trace yields zero misses in the window, and let \texttt{1miss} evaluate to $1$ if the trace yields exactly one miss. 
Unlike previous properties, these two executions are completely independent and do not require relational trace constraints. 
The QHL formula directly computes the absolute difference between expected values across the probability measure space $\mu$:
\begin{align*}
	\Big| \big(\Ex \pi_{1} . \ \texttt{0miss}(\pi_1)\big) - \big(\Ex \pi_{2} . \ \texttt{1miss}(\pi_2)\big) \Big| 
\end{align*}
and we ensure that the result is bigger than $\varepsilon$ (\ie $> \varepsilon$).

This evaluation considers a cache with 256 lines running a program of 1024 blocks.
The results of this evaluation for different trace lengths $T \in \{10, 20\}$ are summarized in Table \ref{tab:eval_cache_replacement}.
As reported in the last column on the table, {\toolname} reaches the same verdict as \hyperpctlStar, that the randomized cache replacement policy is indeed satisfied in this system.

\begin{table}[h]
	\centering
	\caption{Randomized Cache Replacement Policy Results}
	\label{tab:eval_cache_replacement}
	\rowcolors{3}{gray!10}{}
	\resizebox{\columnwidth}{!}{%
		\begin{tabular}{c @{\hspace{1.4em}} c c c @{\hspace{1.4em}} c c}
			\textbf{Trace Length} & \textbf{Number of} & \textbf{Time} & \textbf{Tolerance} & \textbf{Absolute} &  \\
			($T$) & \textbf{Samples} & (s) & $\varepsilon$ & \textbf{Difference} & $ > \varepsilon$ \\
			\midrule[1px]
			$10$ & $8,122$ & $2$ & $0.05$ & $0.43$ & $\checkmark$ \\
			$10$ & $8,122$ & $2$ & $0.05$ & $0.44$ & $\checkmark$ \\
			$20$ & $203,008$ & $55$ & $0.01$ & $0.14$ & $\checkmark$ \\
			$20$ & $203,008$ & $57$ & $0.01$ & $0.13$ & $\checkmark$ \\
	\end{tabular}}
\end{table}

\paragraph{\textbf{Probabilistic Noninterference}}
This case study uses the same multithreaded system that was introduced in Section\ref{sec:evaluation}, in which the thread execution lengths and the termination probabilities depended on the high input $h$.
Probabilistic noninterference was formulated in \hyperpctlStar as:
\begin{align*}
	\mathbb{P}^{\pi_1}\Big((\LTLcircle h_0^{\pi_1}) &\Rightarrow (\LTLdiamond(F^{\pi_1} \wedge l_0^{\pi_1}))\Big) \\ &\approx_\varepsilon \mathbb{P}^{\pi_2}\Big((\LTLcircle h_1^{\pi_2}) \Rightarrow (\LTLdiamond(F^{\pi_2} \wedge l_0^{\pi_2}))\Big)
\end{align*}
by ensuring the $\varepsilon$-equivalence of the probability of two random execution traces ($\pi_1$ and $\pi_2$) one starting from $h=0$ and the other from $h=1$, both terminating with low output $l=0$.
A symmetric formula should also be evaluated for the $l=1$ output.

In QHL, we evaluate this property by defining two relational trace constraints, $R_{h_0}$ and $R_{h_1}$, which restrict our sampling space to traces that begin with inputs $h_0$ and $h_1$, respectively. 
We then define an indicator function, \texttt{evnTermL0}, which models $\LTLdiamond(F \wedge l_0)$ returning $1$ if the trace eventually reaches a termination state with output $l=0$, and $0$ otherwise. 
QHL measures the expected value of this indicator function under both conditional measure spaces and returns the absolute difference:
\begin{align*}
	\bigg| \Big(\Ex \pi_{1 \sim {R_{h_0}(\pi_1)}} . \ &\texttt{evnTermL0}(\pi_1) \Big) \\ - &\Big(\Ex \pi_{2 \sim {R_{h_1}(\pi_2)}} . \ \texttt{evnTermL0}(\pi_2)\Big) \bigg|
\end{align*}

We check approximate equivalence ($\approx_\varepsilon$) by ensuring the result of this evaluation is bounded by $\varepsilon$.
The results of evaluating this formulation on probabilistic noninterference are summarized in Table \ref{tab:eval_noninter_hyperpctlstar}.

\begin{table}[h]
	\centering
	\caption{Probabilistic Noninterference Results}
	\label{tab:eval_noninter_hyperpctlstar}
	\rowcolors{3}{gray!10}{}
	\resizebox{\columnwidth}{!}{%
		\begin{tabular}{c @{\hspace{1.4em}} c c c @{\hspace{1.4em}} c c}
			& \textbf{Number of} & \textbf{Time} & \textbf{Tolerance} & \textbf{Absolute} &  \\
			\textbf{Threads $N$} & \textbf{Samples} & (s) & $\varepsilon$ & \textbf{Difference} & $ \le \varepsilon$ \\
			\midrule[1px]
			$10$ & $8,122$ & $0.1$ & $0.05$ & $0.118$ & $\times$ \\
			$10$ & $203,008$ & $2.53$ & $0.01$ & $0.111$ & $\times$ \\
			$25$ & $8,122$ & $0.1$ & $0.05$ & $0.158$ & $\times$ \\
			$25$ & $203,008$ & $2.57$ & $0.01$ & $0.149$ & $\times$ \\
			$50$ & $8,122$ & $0.11$ & $0.05$ & $0.169$ & $\times$ \\
			$50$ & $203,008$ & $2.9$ & $0.01$ & $0.153$ & $\times$
	\end{tabular}}
\end{table}

{\toolname} reports that this system is interfering which matches the verdict reported by \hyperpctlStar.
However, {\toolname} needs much more samples and takes a much longer time than \hyperpctlStar, which as we discussed before, was predictable due to the differences between Hoeffding inequality and SPRT \cite{agha2018survey}.

\paragraph{\textbf{Dining Cryptographers}}
In this use case, $N$ cryptographers must determine whether the National Security Agency (NSA) or one of the cryptographers paid for dinner, without revealing the identity of the payer. 
Adjacent cryptographers flip a coin to establish a 1-bit shared secret ($s_{ij}$) and publicly state whether the two coins they can see (the left- and right-hand ones)
\emph{agree} or \emph{disagree}, stating the opposite if they were the ones who paid. 
Then an even number of agrees indicates that the NSA paid, while an odd number indicates that a cryptographer paid.

The security property then ensures that if some cryptographer paid, the probability that it was cryptographer $i$ or $j$ must be approximately equal, independently of the shared secret (the coin tosses) between them. 
Let $P_i$ and $P_j$ denote the outcomes where cryptographer $i$ and cryptographer $j$ paid, respectively. 
The logic asserts that the probability of each outcome ($i$ paying vs. $j$ paying) under each coin toss condition ($s_{ij}$ vs. $\neg s_{ij}$) remains $\varepsilon$-equivalent:
\begin{align*}
	\mathbb{P}^{\pi_1}&(\LTLdiamond(\neg S_{ij}^{\pi_1} \wedge \LTLcircle P_i^{\pi_1})) 
	\approx_\epsilon \mathbb{P}^{\pi_2}(\LTLdiamond(S_{ij}^{\pi_2} \wedge \LTLcircle P_i^{\pi_2})) \\
	& \approx_\epsilon \mathbb{P}^{\pi_3}(\LTLdiamond(\neg S_{ij}^{\pi_3} \wedge \LTLcircle P_j^{\pi_3}))
	\approx_\epsilon \mathbb{P}^{\pi_4}(\LTLdiamond(S_{ij}^{\pi_4} \wedge \LTLcircle P_j^{\pi_4}))
\end{align*}

QHL handles this four-way equivalence by explicitly measuring the absolute distances between the relevant pairs. 
We define indicator functions \texttt{paid}$_i$ and \texttt{paid}$_j$ (evaluating to $1$ if the respective cryptographer paid) and relational trace constraints $R_{S_{ij}}$ and $R_{\neg S_{ij}}$ to filter traces by the outcome of the coin toss. 

To formally verify the protocol, the statistical engine evaluates the absolute difference between any two specific conditional expectations. 
For example, to prove that the coin toss leaks no information regarding cryptographer $i$ paying, the QHL formula computes:
\begin{align*}
	\Big| \Ex \pi_{1 \sim {R_{S_{ij}}(\pi_1)}} . \ \mathbb{I}_{P_i}(\pi_1) - \Ex \pi_{2 \sim {R_{\neg S_{ij}}(\pi_2)}} . \ \mathbb{I}_{P_i}(\pi_2) \Big|
\end{align*}

As before, we check approximate equivalence ($\approx_\varepsilon$) by ensuring the result is bounded $\le \varepsilon$.
The results of evaluating the probabilistic anonymity property for varying numbers of cryptographers ($N \in \{100, 1000\}$) using {\toolname} are summarized in Table \ref{tab:eval_dining_crypto}.

\begin{table}[h]
	\centering
	\caption{Dining Cryptographers Results}
	\label{tab:eval_dining_crypto}
	\rowcolors{3}{gray!10}{}
	\resizebox{\columnwidth}{!}{%
		\begin{tabular}{c @{\hspace{1.4em}} c c c @{\hspace{1.4em}} c c}
			\textbf{Cryptographers} & \textbf{Number of} & \textbf{Time} & \textbf{Tolerance} & \textbf{Absolute} &  \\
			($N$) & \textbf{Samples} & (s) & $\varepsilon$ & \textbf{Difference} & $ \le \varepsilon$ \\
			\midrule[1px]
			$100$ & $8,122$ & $0.1$ & $0.05$ & $0.0145$ & $\checkmark$ \\
			$100$ & $8,122$ & $0.1$ & $0.05$ & $0.0006$ & $\checkmark$ \\
			$1000$ & $203,008$ & $2.5$ & $0.01$ & $0.0044$ & $\checkmark$ \\
			$1000$ & $203,008$ & $2.5$ & $0.01$ & $0.0069$ & $\checkmark$ \\
	\end{tabular}}
\end{table}

As reported in the last column of the table, {\toolname} reaches the same verdict as \hyperpctlStar, that the dining cryptographers protocol does indeed leak no information regarding cryptographer $i$ paying.

\else
\fi

\end{document}